\documentclass[11pt,reqno]{article}
\usepackage{jmlr2e}
\usepackage{fullpage,times,graphicx,amssymb,amsmath,amsfonts,bbm,psfrag}
\usepackage{booktabs}
\usepackage[table]{xcolor}
\usepackage{epstopdf}
\usepackage{commath}
\usepackage{thmtools}
\usepackage{thm-restate}
\usepackage{natbib}
\usepackage{dirtytalk}
\usepackage{empheq}
\usepackage{subfig,float}
\usepackage{pifont}
\usepackage{tabularx}
\usepackage{array}
\usepackage{mathtools}
\usepackage{algorithm,algcompatible,algpseudocode}
\usepackage{multicol,lipsum}

\usepackage{multirow}
\usepackage{siunitx}
\usepackage[table]{xcolor}
\usepackage{etoolbox}

\robustify\bfseries

\definecolor{referencegray}{gray}{0.93}
\definecolor{methodgray}{gray}{0.97}

\usepackage[most]{tcolorbox}
\newtcolorbox[auto counter, number within=section]{note}[2][]{colback=yellow!10!white,colframe=yellow!50!black,fonttitle=\bfseries, title=Note~\thetcbcounter: #2,#1}
\usepackage{caption}
\definecolor{ddarkbrown}{rgb}{0.5,0.2,0.05}
\definecolor{bbluegray}{rgb}{0.05,0,0.5}
\usepackage[colorlinks,citecolor=bbluegray,linkcolor=ddarkbrown,urlcolor=blue,breaklinks]{hyperref}

\algnewcommand{\Inputs}[1]{\State \textbf{Inputs: \:}{#1}}
\algnewcommand{\Output}[1]{\State \textbf{Output: \:}{#1}}
\algnewcommand{\Initialize}[1]{\State \textbf{Initialize: \:}{#1}}
\algnewcommand{\IIf}[1]{\State\algorithmicif\ #1\ \algorithmicthen}
\algnewcommand{\EndIIf}{\unskip\ \algorithmicend\ \algorithmicif}

\let \oldsection \section
\renewcommand{\section}{\vspace{3ex plus 1ex}\oldsection}

\newcommand{\BEAS}{\begin{eqnarray*}}
\newcommand{\EEAS}{\end{eqnarray*}}
\newcommand{\BEA}{\begin{eqnarray}}
\newcommand{\EEA}{\end{eqnarray}}

\newcommand{\BEQ}{\begin{equation}}
\newcommand{\EEQ}{\end{equation}}
\newcommand{\BIT}{\begin{itemize}}
\newcommand{\EIT}{\end{itemize}}
\newcommand{\BNUM}{\begin{enumerate}}
\newcommand{\ENUM}{\end{enumerate}}

\newcommand{\BA}{\begin{array}}
\newcommand{\EA}{\end{array}}

\newtheorem{mythm}{Theorem}[section]

\newtheorem{myprop}{Proposition}[section]
\newtheorem{mylem}{Lemma}[section]
\newtheorem{mycor}{Corollary}[section]

\newtheorem{myassump}{Assumption}[section]

\title{Domain Adaptation of Mismatched Proximal Denoiser for Plug-and-Play Image Reconstruction}

\begin{document}

    \author{\name Guixian Xu  \email gxx422@student.bham.ac.uk\\
        \addr School of Mathematics,\\ University of Birmingham \\ \\
         \name Jinglai Li  \email j.li.10@bham.ac.uk\\
        \addr School of Mathematics,\\ University of Birmingham \\ \\
    \name Junqi Tang  \email j.tang.2@bham.ac.uk\\
        \addr School of Mathematics,\\ University of Birmingham \\
        \\
        }

    \editor{}

    \maketitle

\begin{abstract}
Plug-and-play proximal gradient descent (PnP-PGD) enables flexible image reconstruction by using denoisers as implicit priors. In practice, these denoisers are often deployed outside their training domains. Existing analyses establish convergence under structural assumptions on the deployed denoiser, such as requiring it to be a proximal map or a contraction. However, they do not measure how domain mismatch affects convergence of PnP-PGD. We define this effect as \emph{proximal mismatch}: the discrepancy between a deployed denoiser $\widehat{\mathsf D}$ and a target-domain reference map $\mathsf D_\star=\operatorname{prox}_{R_\star}$ associated with the underlying regularizer $R_\star$. Under this mismatch, each denoising update becomes an inexact proximal step for the target objective. We further derive a stationarity bound that decays at a rate of $\mathcal{O}(1/K)$, with an additive term proportional to the average squared proximal mismatch. This result motivates adaptation via proximal matching rather than MSE-based adaptation alone. We study this approach with two established denoiser families: learned proximal networks and gradient-step denoisers. Experiments on Gaussian deblurring and super-resolution under substantial domain shift show that proximal matching adaptation improves reconstruction quality significantly over MSE-based adaptation, yielding the largest numerical gains in the few-shot regime.
\end{abstract}

\section{Introduction}

Imaging inverse problems recover an unknown image from noisy measurements by combining a forward physical model with a prior. Plug-and-play (PnP) methods replace the regularization step of an iterative optimization algorithm with a trained denoiser~\citep{venkatakrishnan2013plug,chan2017plug,kamilov2023plug}. This design separates the prior from the forward model and allows the same denoiser to be used with different imaging operators. However, a generic denoiser may not be the proximal map of any regularizer, so the resulting iterations may not minimize an explicit objective~\citep{buzzard2018plug,kamilov2023plug}. Existing convergence analyses have therefore studied PnP as a fixed-point iteration under conditions such as nonexpansiveness and contractivity~\citep{chan2017plug,ryu2019plug}. More recent work has designed denoisers with proximal structure, including convolutional proximal networks, gradient-step denoisers, and learned proximal networks~\citep{hertrich2021convolutional,hurault2022gradient,hurault2022proximal,fang2024whats}. Under suitable conditions, these denoisers allow PnP-PGD to be analyzed as proximal gradient descent on an explicit objective and provide first-order convergence guarantees.

\noindent
Despite this progress, PnP remains sensitive to prior mismatch. A denoiser trained on one image distribution may be deployed on another, creating a mismatch between the learned and target priors. This mismatch is difficult to correct in few-shot scenarios with limited target data. Even if the deployed denoiser satisfies the conditions required for convergence, the regularizer it represents may not match the target domain. Existing work has analyzed prior mismatch in SD-RED and PnP-ADMM and has explored test-time adaptation for PnP~\citep{shoushtari2022deep,chandler2023overcoming,shoushtari2024prior}. Related studies show that denoiser inputs along PnP trajectories can differ from standard Gaussian denoising samples~\citep{vo2024plug}, and that proximal or MAP behavior differs from MMSE denoising~\citep{Fermanian2023PnPReG,vert2026beyond}. Thus, denoising MSE alone may not target the operator behavior required by PnP reconstruction. In the structured proximal setting, this behavior is represented by a target-domain proximal map. What remains unclear is how replacing this map with a mismatched denoiser affects convergence of the target objective in PnP-PGD and how this effect should guide adaptation. This issue matters because MSE-based denoiser training targets a conditional mean but does not directly enforce agreement with a target proximal map. Therefore, better denoising performance does not necessarily reduce the proximal error encountered during PnP-PGD.

\noindent
We analyze PnP-PGD under domain shift by comparing the deployed denoiser $\widehat{\mathsf D}$ with a target-domain reference proximal map $\mathsf D_\star = \operatorname{prox}_{R_\star}$. This reference map corresponds to the target regularizer $R_\star$ and defines the target objective $F_\star = \eta f + R_\star$. Replacing $\mathsf D_\star$ with $\widehat{\mathsf D}$ turns each denoising update into an inexact proximal step for $F_\star$. We define \emph{proximal mismatch} as the discrepancy between their outputs at the query points generated by PnP-PGD. We then derive a first-order stationarity bound for PnP-PGD. This bound decays at a rate of $\mathcal{O}(1/K)$ and includes an additive term proportional to the average squared proximal mismatch. This result links domain shift directly to convergence of the target objective. Table~\ref{tab:compare-strategy} positions our analysis relative to the most closely related work on inexact proximal methods, prior mismatch, and structured PnP denoisers.

\begin{table}[t]
\centering
\small
\setlength{\tabcolsep}{4pt}
\renewcommand{\arraystretch}{1.12}
\begin{tabular}{
    p{0.22\linewidth}
    p{0.12\linewidth}
    p{0.24\linewidth}
    p{0.26\linewidth}
    p{0.10\linewidth}
}
\toprule
\textbf{Work} &
\textbf{Algorithm} &
\textbf{Prior / denoiser model} &
\textbf{Mismatch / error model} &
\textbf{Assumptions} \\
\midrule
\multicolumn{5}{l}{\emph{Classical / RED-type analyses}} \\
\midrule
\cite{schmidt2011convergence}
& PGD
& Explicit proximal operator
& Inexact proximal steps
& $\epsilon$-optimal \\

\cite{shoushtari2022deep}
& RED
& MMSE denoiser
& Prior mismatch
& BD, BI \\
\midrule
\multicolumn{5}{l}{\emph{Plug-and-Play analyses}} \\
\midrule
\cite{xu2020provable}
& PnP-PGD
& MMSE denoiser
& Prior match
& TD \\

\cite{hurault2022proximal}
& PnP-PGD
& Proximal denoiser*
& Prior match
& TD \\

\cite{shoushtari2024prior}
& PnP-ADMM
& MMSE denoiser
& Prior mismatch
& BD, BI \\

\cite{fang2024whats}
& PnP-PGD
& Proximal denoiser
& Prior match
& TD \\

\rowcolor{gray!12}
Ours
& PnP-PGD
& Proximal denoiser
& Proximal mismatch
& ER \\
\bottomrule
\end{tabular}
\caption{
Theoretical context for the convergence analysis developed in this paper. The table compares representative proximal-gradient, RED, and PnP analyses in terms of algorithm, denoiser model, mismatch/error model, and assumptions. TD = matched target denoiser used at inference; BD = bounded denoiser; BI = bounded iterates; ER = target-envelope regularity. '*' indicates a proximal interpretation that holds under additional conditions.
}
\label{tab:compare-strategy}
\end{table}

\noindent
This bound suggests a direct adaptation strategy. Instead of fine-tuning the denoiser with MSE alone, we use proximal matching to align it with the target-domain proximal map~\citep{fang2024whats}. We study this strategy with two structured denoiser families: learned proximal networks and gradient-step denoisers. This allows us to test whether the same adaptation principle works across two different ways of realizing proximal maps. We evaluate our method on Gaussian deblurring and super-resolution under substantial domain shifts. Proximal-matching adaptation reduces the proximal mismatch along the PnP-PGD trajectory and improves reconstruction quality over MSE-based adaptation. In severe mismatch settings, it improves reconstruction PSNR by more than $10$ dB, with the largest gains in the few-shot regime.

\noindent
Our contributions are threefold:
\begin{itemize}
    \item \textbf{Proximal-mismatch analysis.} We formulate domain shift in PnP-PGD as \emph{proximal mismatch}. We show that this mismatch induces an inexact proximal step and derive an $\mathcal{O}(1/K)$ stationarity bound with an additive term proportional to the average squared proximal mismatch.

    \item \textbf{Proximal-matching adaptation.} Guided by this bound, we propose a few-shot adaptation method based on proximal matching rather than MSE-based adaptation. We apply it to learned proximal networks and gradient-step denoisers.

    \item \textbf{Experimental validation.} We evaluate the proposed method on Gaussian deblurring and super-resolution under substantial domain shifts. Proximal matching adaptation reduces the proximal mismatch along the PnP-PGD trajectory and improves reconstruction PSNR over MSE-based adaptation, with gains exceeding $10$ dB in severe settings and the largest gains in the few-shot regime.

\end{itemize}

\section{Background and Problem Setup}
\label{sec:background}
\paragraph{Inverse Problems.}
We consider the recovery of an unknown signal
$\boldsymbol{x}\in\mathbb{R}^n$ from a noisy measurement
\[
    \boldsymbol{y}
    =
    A\boldsymbol{x}
    +
    \boldsymbol{e},
\]
where $A\in\mathbb{R}^{m\times n}$ is a known forward operator and
$\boldsymbol{e}\in\mathbb{R}^m$ denotes measurement noise. The inverse problem is often ill-posed: a solution may not exist, uniqueness may fail, or small perturbations in the measurements may cause large changes in the recovered signal~\citep{engl1996regularization,hansen2010discrete}. Regularization addresses this instability by adding prior information about the unknown signal. Given a regularizer $R$, we consider
\begin{equation}
    \min_{\boldsymbol{x}\in\mathbb{R}^n}
    F(\boldsymbol{x})
    :=
    \eta f(\boldsymbol{x})+R(\boldsymbol{x}),
    \label{eq:bg_variational_problem}
\end{equation}
where $f$ measures consistency with the observations, $R$ favors plausible solutions, and $\eta>0$ controls the relative weight of the data-fidelity term.

\paragraph{Plug-and-Play Denoisers}
A common way to solve~\eqref{eq:bg_variational_problem} is to use first-order proximal splitting algorithms~\citep{beck2009fast,parikh2014proximal}. It handles the data-fidelity term by a gradient step and the regularizer by its proximal map. We define
\begin{equation}
    \operatorname{prox}_{R}(\boldsymbol{z})
    :=
    \arg\min_{\boldsymbol{x}\in\mathbb{R}^n}
    \left\{
        \frac{1}{2}\|\boldsymbol{x}-\boldsymbol{z}\|^2
        +
        R(\boldsymbol{x})
    \right\},
    \label{eq:bg_prox_def}
\end{equation}
where the proximal parameter is absorbed into \(R\). With the normalization in~\eqref{eq:bg_variational_problem}, proximal gradient descent takes the form
\begin{equation}
    \boldsymbol{z}_{k+1}
    =
    \boldsymbol{x}_k-\eta\nabla f(\boldsymbol{x}_k),
    \qquad
    \boldsymbol{x}_{k+1}
    =
    \operatorname{prox}_{R}(\boldsymbol{z}_{k+1}).
    \label{eq:bg_pgd_exact}
\end{equation}

\noindent
This splitting viewpoint naturally motivates Plug-and-Play (PnP) methods~\citep{venkatakrishnan2013plug,chan2017plug,kamilov2023plug}, which replace the proximal operator of \(R\) with an image
denoiser \(\mathsf{D}\):
\begin{equation}
    \boldsymbol{z}_{k+1}
    =
    \boldsymbol{x}_k-\eta\nabla f(\boldsymbol{x}_k),
    \qquad
    \boldsymbol{x}_{k+1}
    =
    \mathsf{D}(\boldsymbol{z}_{k+1}).
    \label{eq:bg_pnp_pgd}
\end{equation}

\noindent
The denoiser is commonly trained to remove additive white Gaussian noise and is then used as an implicit image prior~\citep{zhang2021plug,kamilov2023plug}. This design allows PnP-PGD to use learned priors without specifying $R$ explicitly. However, a generic denoiser need not be the proximal map of any regularizer, so the resulting iteration may not minimize an explicit objective. In this work, we focus on denoisers that admit a proximal representation~\citep{hertrich2021convolutional,hurault2022gradient,hurault2022proximal,fang2024whats}. This setting allows PnP-PGD to be analyzed as proximal gradient descent on an explicit objective.

\subsection{Structured Proximal Denoisers}
A denoiser $\mathsf D$ has a proximal representation if there exists a regularizer $R$ such that
$\mathsf D=\operatorname{prox}_{R}$. This structure allows PnP-PGD to be related to an explicit optimization objective. We represent the target domain by a regularizer $R_\star$ and define its reference proximal map as
\begin{equation}
    \mathsf{D}_\star
    =
    \operatorname{prox}_{R_\star}.
    \label{eq:bg_target_reference}
\end{equation}
The corresponding target objective is
\begin{equation}
    F_\star(\boldsymbol{x})
    =
    \eta f(\boldsymbol{x})+R_\star(\boldsymbol{x}).
    \label{eq:bg_target_objective}
\end{equation}
Using $\mathsf D_\star$ in~\eqref{eq:bg_pnp_pgd} recovers proximal gradient descent on $F_\star$ under the normalization introduced above.

\noindent
In practice, PnP-PGD uses a deployed denoiser $\widehat{\mathsf D}$. This denoiser may be trained on a source domain or adapted using limited target-domain data. Its outputs may therefore differ from those of $\mathsf D_\star$ at the query points generated by PnP-PGD. We study how this discrepancy affects stationarity of $F_\star$. This notation separates the target model from the deployed operator. The map $\mathsf D_\star$ defines the target regularizer and objective, while $\widehat{\mathsf D}$ is the operator used during reconstruction. The next two paragraphs describe two denoiser families that realize proximal maps in different ways: learned proximal networks and gradient-step denoisers~\citep{hurault2022gradient,hurault2022proximal,fang2024whats}.

\paragraph{Learned Proximal Networks.}
Learned Proximal Networks (LPNs) are designed so that the learned denoising map is an exact proximal operator of an associated learned regularizer~\citep{fang2024whats}. More specifically, they parameterize the denoiser through the gradient of a strongly convex potential, which yields a globally defined proximal map and an induced, generally nonconvex, regularization function. In our setting, this architecture-level proximality provides a direct realization of the target reference
\(\mathsf{D}_\star=\operatorname{prox}_{R_\star}\). Consequently, the target objective \(F_\star=\eta f+R_\star\) is not merely implicit in the iteration but is explicitly tied to the learned proximal map.

\paragraph{Gradient-Step Proximal Denoisers.}
A second family is given by Gradient-Step (GS) denoisers, which take the form
\begin{equation}
    \mathsf{D}_\theta
    =
    \operatorname{Id}-\nabla g_\theta,
    \label{eq:bg_gs_denoiser}
\end{equation}
where \(g_\theta\) is a scalar-valued neural potential~\citep{cohen2021has,hurault2022gradient}. This parameterization imposes a conservative structure on the denoiser, since the residual \(\operatorname{Id}-\mathsf{D}_\theta\) is represented as the gradient of a potential. Under suitable smoothness and contractivity conditions, such maps can be interpreted as proximal operators of induced regularizers~\citep{hurault2022proximal}. GS denoisers therefore provide another route to a well-defined proximal reference, but with a different guarantee mechanism from LPNs: LPNs build proximality directly into the architecture, whereas GS denoisers obtain proximal structure through analytic conditions on the learned potential.

\noindent
Together, these two examples define the proximal-structured setting considered in this paper. In both cases, one can distinguish between the ideal target proximal map \(\mathsf{D}_\star\) and the deployed inference denoiser
\(\widehat{\mathsf{D}}\). This distinction is the basis for the subsequent analysis of how denoiser mismatch, or prior shift, propagates into stationarity errors for the target objective \(F_\star\).

\section{Theory}
\label{sec:theory}
Our analysis consists of three main steps. First, we show how the mismatch between the deployed and target proximal denoisers affects the target proximal update. Second, we prove that the target objective still decreases, up to an error caused by this mismatch. Third, we show that the target proximal step approximately satisfies its optimality condition. Putting these results together, we obtain a bound on the average stationarity gap, where the remaining error depends on the mismatch gap. Finally, we show that the assumptions used in our analysis hold for Learned Proximal Networks and Gradient-Step denoisers.

\subsection{Target Objective and Basic Assumptions}
\label{subsec:general_target_prox}

All stationarity statements are made with respect to the target objective~\eqref{eq:bg_target_objective} defined by the target proximal reference
$\mathsf{D}_\star=\operatorname{prox}_{R_\star}.$
The map $\mathsf{D}_\star$ specifies the target prior and the objective whose stationary points are of interest, whereas the deployed denoiser $\widehat{\mathsf{D}}$ determines the iterates generated by PnP-PGD.

\noindent
We measure target stationarity through $\|\nabla F_\star(x)\|$. Accordingly, $R_\star$ is assumed to be differentiable at the deployed iterates. The additional regularity required for the target proximal-subproblem is stated below and verified for LPN and GS target references in Section~\ref{subsec:proximal_realizations}.

\begin{myassump}[Smooth data fidelity]
\label{assump:f_smooth}
The data-fidelity term \(f:\mathbb{R}^n\to\mathbb{R}\) is differentiable and has \(L_f\)-Lipschitz continuous gradient.
\end{myassump}

\begin{myassump}[Lower bounded target objective]
\label{assump:lower_bound}
The target objective \(F_\star\) is bounded below:
\[
    F_\star^{\inf}
    :=
    \inf_{\boldsymbol{x}}F_\star(\boldsymbol{x})
    >
    -\infty .
\]
\end{myassump}

\subsection{Proximal Mismatch as Inexact Proximal Step}
\label{subsec:proximal_mismatch_inexact_step}

Let $\{\boldsymbol{x}_k\}_{k\geq0}$ be the PnP-PGD sequence generated by the deployed denoiser $\widehat{\mathsf D}$. For $k\geq1$, define the query point
\begin{equation}
    \boldsymbol{z}_k
    =
    \boldsymbol{x}_{k-1}
    -
    \eta\nabla f(\boldsymbol{x}_{k-1}).
    \label{eq:zk_def}
\end{equation}
The deployed update is
\begin{equation}
    \boldsymbol{x}_k
    =
    \widehat{\mathsf D}(\boldsymbol{z}_k).
    \label{eq:deployed_update}
\end{equation}
At the same query point, the target proximal response is
$\mathsf D_\star(\boldsymbol{z}_k)$. We define the proximal error as
\begin{equation}
    \boldsymbol{\xi}_k
    :=
    \widehat{\mathsf D}(\boldsymbol{z}_k)
    -
    \mathsf D_\star(\boldsymbol{z}_k),
    \label{eq:ek_def}
\end{equation}
and the proximal mismatch as
\begin{equation}
    d_k
    :=
    \|\boldsymbol{\xi}_k\|_2
    =
    \big\|
        \widehat{\mathsf D}(\boldsymbol{z}_k)
        -
        \mathsf D_\star(\boldsymbol{z}_k)
    \big\|_2.
    \label{eq:dk_def}
\end{equation}
When a deterministic upper bound is available, we write
$d_k\leq\Delta_k$.

\noindent
The deployed update can now be written as
\begin{equation}
    \boldsymbol{x}_k
    =
    \mathsf D_\star(\boldsymbol{z}_k)
    +
    \boldsymbol{\xi}_k.
    \label{eq:inexact_target_prox_step}
\end{equation}
Thus, using $\widehat{\mathsf D}$ in place of $\mathsf D_\star$ gives an inexact proximal step for the target objective.

\noindent
To connect $d_k$ to standard measures of proximal inexactness, define the target proximal subproblem
\begin{equation}
    H_k(\boldsymbol{u})
    :=
    \frac{1}{2}
    \|\boldsymbol{u}-\boldsymbol{z}_k\|_2^2
    +
    R_\star(\boldsymbol{u}).
    \label{eq:Hk_def}
\end{equation}
Since
\[
    \mathsf D_\star(\boldsymbol{z}_k)
    \in
    \arg\min_{\boldsymbol{u}} H_k(\boldsymbol{u}),
\]
we define the proximal subproblem error as
\begin{equation}
    \epsilon_k
    :=
    H_k\big(
        \widehat{\mathsf D}(\boldsymbol{z}_k)
    \big)
    -
    H_k\big(
        \mathsf D_\star(\boldsymbol{z}_k)
    \big)
    \geq 0.
    \label{eq:eps_induced}
\end{equation}

\begin{mylem}[Proximal mismatch controls proximal subproblem error]
\label{lemma:bridge_discrepancy}
Fix $k\geq1$ and let $H_k$ be defined by~\eqref{eq:Hk_def}. Assume that $H_k$ is differentiable and $L_H$-smooth on the line segment between
$\mathsf D_\star(\boldsymbol{z}_k)$ and
$\widehat{\mathsf D}(\boldsymbol{z}_k)$. Then
\begin{equation}
    \epsilon_k
    \leq
    \frac{L_H}{2}
    \big\|
        \widehat{\mathsf D}(\boldsymbol{z}_k)
        -
        \mathsf D_\star(\boldsymbol{z}_k)
    \big\|_2^2
    =
    \frac{L_H}{2}d_k^2.
    \label{eq:bridge_smooth}
\end{equation}
Consequently, if $d_k\leq\Delta_k$, then
\begin{equation}
    \epsilon_k
    \leq
    \frac{L_H}{2}\Delta_k^2.
\end{equation}
\end{mylem}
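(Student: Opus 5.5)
The plan is to use the descent lemma for $H_k$ along the segment, expanded at the exact minimizer $\boldsymbol{u}_\star := \mathsf D_\star(\boldsymbol{z}_k)$, and to kill the first-order term using optimality. Write $\hat{\boldsymbol{u}} := \widehat{\mathsf D}(\boldsymbol{z}_k)$, so that $\hat{\boldsymbol{u}} - \boldsymbol{u}_\star = \boldsymbol{\xi}_k$ by~\eqref{eq:ek_def}. Then define the one-dimensional function $\varphi(t) := H_k(\boldsymbol{u}_\star + t\boldsymbol{\xi}_k)$ for $t\in[0,1]$. It is differentiable with $\varphi'(t) = \langle \nabla H_k(\boldsymbol{u}_\star + t\boldsymbol{\xi}_k), \boldsymbol{\xi}_k\rangle$ by the differentiability hypothesis on the segment.

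\textbf{Step 1 (first-order term vanishes).} Since $\boldsymbol{u}_\star$ is a global minimizer of $H_k$ and $H_k$ is differentiable there (the segment contains its endpoint $\boldsymbol{u}_\star$), Fermat's rule gives $\nabla H_k(\boldsymbol{u}_\star) = 0$. I expect this to be the main point needing care. The smoothness hypothesis is stated only on the segment, so I must make sure differentiability at $\boldsymbol{u}_\star$ is understood as genuine (Fréchet) differentiability of $H_k$ at that point, i.e.\ in a neighbourhood sense. Only then does global minimality yield a vanishing gradient rather than merely a vanishing directional derivative. Even in the weakest reading, minimality of $\varphi$ at $t=0$ over $[0,1]$ gives $\varphi'(0)\ge 0$. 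I would use differentiability of $R_\star$ at $\boldsymbol{u}_\star$ (a two-sided condition) to get $\varphi'(0)=0$; in fact only $\varphi'(0)=0$ is needed.

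\textbf{Step 2 (integral form of the remainder).} Write
\begin{equation*}
\epsilon_k = \varphi(1)-\varphi(0) = \int_0^1 \big\langle \nabla H_k(\boldsymbol{u}_\star + t\boldsymbol{\xi}_k) - \nabla H_k(\boldsymbol{u}_\star), \boldsymbol{\xi}_k\big\rangle\,dt ,
\end{equation*}
using $\varphi'(0)=0$. This requires $\varphi'$ to be integrable. That holds because $L_H$-smoothness on the segment means $\nabla H_k$ is $L_H$-Lipschitz there, so $\varphi'$ is Lipschitz and $\varphi$ is absolutely continuous. Applying Cauchy--Schwarz and the Lipschitz bound $\|\nabla H_k(\boldsymbol{u}_\star + t\boldsymbol{\xi}_k) - \nabla H_k(\boldsymbol{u}_\star)\|\le L_H t\|\boldsymbol{\xi}_k\|$ bounds the integrand by $L_H t\, d_k^2$. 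Integrating over $[0,1]$ gives $\epsilon_k \le \tfrac{L_H}{2} d_k^2$, which is~\eqref{eq:bridge_smooth}.

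\textbf{Step 3 (consequence).} If $d_k \le \Delta_k$, monotonicity of $s\mapsto s^2$ on $[0,\infty)$ gives $\epsilon_k \le \tfrac{L_H}{2}\Delta_k^2$ immediately. Finally, $\epsilon_k\ge0$ holds by the minimality of $\boldsymbol{u}_\star$, as already recorded in~\eqref{eq:eps_induced}, so the bound is a two-sided control $0\le\epsilon_k\le \tfrac{L_H}{2}d_k^2$.
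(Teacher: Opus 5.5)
Your proof is correct and takes the same route as the paper: you expand $H_k$ at the exact minimizer $\mathsf D_\star(\boldsymbol{z}_k)$ using the $L_H$-smoothness quadratic upper bound on the segment, and you drop the linear term because $\nabla H_k(\mathsf D_\star(\boldsymbol{z}_k))=0$. The only difference is that you derive that descent inequality explicitly through the integral remainder of $\varphi(t)$, and you justify more carefully why the first-order term vanishes, whereas the paper invokes the inequality directly.
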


\noindent\emph{Proof.}
See Appendix~\ref{app:proof-bridge}.

\subsection{Convergence Analysis}
\label{subsec:general_convergence}

The next assumption summarizes the regularity needed to convert proximal inexactness into a stationarity bound.

\begin{myassump}[Proximal subproblem regularity]
\label{assump:prox_subproblem_regularity}
For every visited query point \(\boldsymbol{z}_k\), the target proximal
subproblem \(H_k\) has the unique minimizer
\(\mathsf{D}_\star(\boldsymbol{z}_k)\). Moreover, \(H_k\) satisfies
Lemma~\ref{lemma:bridge_discrepancy} with a constant \(L_H\) independent of
\(k\), and at the deployed iterate
\(\boldsymbol{x}_k=\widehat{\mathsf{D}}(\boldsymbol{z}_k)\),
\begin{equation}
    \|\nabla H_k(\boldsymbol{x}_k)\|^2
    \leq
    2L_H
    \left(
        H_k(\boldsymbol{x}_k)
        -
        H_k(\mathsf{D}_\star(\boldsymbol{z}_k))
    \right).
    \label{eq:gradient_suboptimality}
\end{equation}
\end{myassump}

\noindent
In the LPN and GS realizations below, this condition follows because the proximal subproblem can be written, up to constants, as a smooth convex potential minus a linear term. In particular,
\[
    \nabla H_k(\boldsymbol{u})
    =
    \mathsf{D}_\star^{-1}(\boldsymbol{u})
    -
    \boldsymbol{z}_k,
\]
and the Lipschitz regularity of \(\mathsf{D}_\star^{-1}\) provides a uniform
smoothness constant.

\begin{mylem}[Descent with inexact target-proximal steps]
\label{lemma:abstract_descent}
Under Assumption~\ref{assump:f_smooth}, the deployed PnP-PGD iterates satisfy
\begin{equation}
    F_\star(\boldsymbol{x}_{k+1})
    \leq
    F_\star(\boldsymbol{x}_k)
    -
    \frac{1-\eta L_f}{2}
    \|\boldsymbol{x}_{k+1}-\boldsymbol{x}_k\|^2
    +
    \epsilon_{k+1}.
    \label{eq:abstract_descent}
\end{equation}
\end{mylem}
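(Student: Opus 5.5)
The argument follows the classical descent lemma for proximal gradient descent, with the exact optimality of the proximal step replaced by $\epsilon_{k+1}$-optimality in the subproblem $H_{k+1}$. We use $\boldsymbol{x}_k$ itself as a competitor in $H_{k+1}$. By definition of $\epsilon_{k+1}$ in~\eqref{eq:eps_induced}, and since $\mathsf D_\star(\boldsymbol{z}_{k+1})$ minimizes $H_{k+1}$, we get
\[
H_{k+1}(\boldsymbol{x}_{k+1}) = H_{k+1}(\mathsf D_\star(\boldsymbol{z}_{k+1})) + \epsilon_{k+1} \le H_{k+1}(\boldsymbol{x}_k) + \epsilon_{k+1}.
\]
This inequality is the only place the mismatch enters. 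It needs only that $\mathsf D_\star(\boldsymbol{z}_{k+1})$ is a global minimizer of $H_{k+1}$, which is the definition of $\operatorname{prox}_{R_\star}$.

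Next we expand both sides using $\boldsymbol{z}_{k+1}=\boldsymbol{x}_k-\eta\nabla f(\boldsymbol{x}_k)$. We have
\[
\tfrac12\|\boldsymbol{x}_{k+1}-\boldsymbol{z}_{k+1}\|^2 = \tfrac12\|\boldsymbol{x}_{k+1}-\boldsymbol{x}_k\|^2 + \eta\langle \nabla f(\boldsymbol{x}_k),\boldsymbol{x}_{k+1}-\boldsymbol{x}_k\rangle + \tfrac{\eta^2}{2}\|\nabla f(\boldsymbol{x}_k)\|^2
\]
and
\[
\tfrac12\|\boldsymbol{x}_k-\boldsymbol{z}_{k+1}\|^2=\tfrac{\eta^2}{2}\|\nabla f(\boldsymbol{x}_k)\|^2 .
\]
The $\eta^2$ terms cancel, leaving
\[
R_\star(\boldsymbol{x}_{k+1}) + \eta\langle \nabla f(\boldsymbol{x}_k),\boldsymbol{x}_{k+1}-\boldsymbol{x}_k\rangle + \tfrac12\|\boldsymbol{x}_{k+1}-\boldsymbol{x}_k\|^2 \le R_\star(\boldsymbol{x}_k)+\epsilon_{k+1}.
\]

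Then we apply the descent lemma from Assumption~\ref{assump:f_smooth}:
\[
f(\boldsymbol{x}_{k+1})\le f(\boldsymbol{x}_k)+\langle\nabla f(\boldsymbol{x}_k),\boldsymbol{x}_{k+1}-\boldsymbol{x}_k\rangle+\tfrac{L_f}{2}\|\boldsymbol{x}_{k+1}-\boldsymbol{x}_k\|^2 .
\]
Multiplying by $\eta>0$ and adding the previous inequality removes the inner-product term. This yields
\[
\eta f(\boldsymbol{x}_{k+1})+R_\star(\boldsymbol{x}_{k+1}) \le \eta f(\boldsymbol{x}_k)+R_\star(\boldsymbol{x}_k) - \tfrac{1-\eta L_f}{2}\|\boldsymbol{x}_{k+1}-\boldsymbol{x}_k\|^2+\epsilon_{k+1},
\]
which is~\eqref{eq:abstract_descent}.

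No step is a genuine obstacle; the computation is routine. Two minor points need attention. First, $R_\star(\boldsymbol{x}_k)$ must be finite so that the competitor comparison is meaningful. This holds for $k\ge1$ because $\boldsymbol{x}_k$ is itself a deployed iterate, where $R_\star$ is assumed differentiable. For $k=0$ we simply assume $F_\star(\boldsymbol{x}_0)<\infty$. Second, no convexity of $R_\star$ is used, so the result holds for the nonconvex regularizers induced by LPN and GS denoisers.
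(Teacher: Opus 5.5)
Your proof is correct and follows essentially the same route as the paper. Like the paper, you use $\epsilon_{k+1}$-optimality to compare $H_{k+1}(\boldsymbol{x}_{k+1})$ with the competitor $\boldsymbol{x}_k$, expand the quadratic terms using $\boldsymbol{z}_{k+1}=\boldsymbol{x}_k-\eta\nabla f(\boldsymbol{x}_k)$, and add $\eta$ times the descent lemma for $f$ so that the inner products cancel; your remarks on the finiteness of $R_\star(\boldsymbol{x}_k)$ and on not needing convexity of $R_\star$ are sound points the paper leaves implicit.
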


\noindent\emph{Proof.}
See Appendix~\ref{app:proof-descent}.

\begin{mylem}[Stationarity residual control]
\label{lemma:abstract_residual}
Under Assumptions~\ref{assump:prox_subproblem_regularity}, for each \(k\geq0\) there exists a residual vector \(\boldsymbol{r}_{k+1}\) such that
\begin{equation}
    \nabla R_\star(\boldsymbol{x}_{k+1})
    =
    \boldsymbol{z}_{k+1}
    -
    \boldsymbol{x}_{k+1}
    -
    \boldsymbol{r}_{k+1},
    \qquad
    \|\boldsymbol{r}_{k+1}\|^2
    \leq
    2L_H\epsilon_{k+1}.
    \label{eq:abstract_residual}
\end{equation}
\end{mylem}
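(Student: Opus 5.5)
The plan is to define the residual as the gradient of the target proximal subproblem at the deployed iterate, and then read off both claims directly from Assumption~\ref{assump:prox_subproblem_regularity}. Concretely, for each $k\geq 0$ I set
\[
    \boldsymbol{r}_{k+1}
    :=
    -\nabla H_{k+1}(\boldsymbol{x}_{k+1}),
\]
which is well defined because Assumption~\ref{assump:prox_subproblem_regularity} requires $H_{k+1}$ to satisfy the hypotheses of Lemma~\ref{lemma:bridge_discrepancy}, including differentiability at $\boldsymbol{x}_{k+1}=\widehat{\mathsf D}(\boldsymbol{z}_{k+1})$, an endpoint of the relevant segment. Differentiating the definition~\eqref{eq:Hk_def} gives
\[
    \nabla H_{k+1}(\boldsymbol{u})=\boldsymbol{u}-\boldsymbol{z}_{k+1}+\nabla R_\star(\boldsymbol{u}).
\]
Evaluating at $\boldsymbol{u}=\boldsymbol{x}_{k+1}$ and rearranging yields
\[
    \nabla R_\star(\boldsymbol{x}_{k+1})=\boldsymbol{z}_{k+1}-\boldsymbol{x}_{k+1}+\nabla H_{k+1}(\boldsymbol{x}_{k+1})=\boldsymbol{z}_{k+1}-\boldsymbol{x}_{k+1}-\boldsymbol{r}_{k+1},
\]
which is the identity in~\eqref{eq:abstract_residual}. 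Here I use that $R_\star$ is differentiable at the deployed iterates, as stipulated in Section~\ref{subsec:general_target_prox}.

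For the norm bound, I apply the gradient–suboptimality inequality~\eqref{eq:gradient_suboptimality} at index $k+1$:
\[
    \|\boldsymbol{r}_{k+1}\|^2=\|\nabla H_{k+1}(\boldsymbol{x}_{k+1})\|^2\leq 2L_H\bigl(H_{k+1}(\boldsymbol{x}_{k+1})-H_{k+1}(\mathsf D_\star(\boldsymbol{z}_{k+1}))\bigr).
\]
Since $\boldsymbol{x}_{k+1}=\widehat{\mathsf D}(\boldsymbol{z}_{k+1})$, the bracket is exactly $\epsilon_{k+1}$ by~\eqref{eq:eps_induced}. This gives $\|\boldsymbol{r}_{k+1}\|^2\leq 2L_H\epsilon_{k+1}$. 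Uniqueness of the minimizer $\mathsf D_\star(\boldsymbol{z}_{k+1})$, also part of the assumption, guarantees that $\epsilon_{k+1}$ is measured against the correct reference value and is nonnegative.

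The only real subtlety is bookkeeping rather than analysis. The assumption is stated for every visited query point $\boldsymbol{z}_k$ with $k\geq1$, so I need to check that index $k+1\geq 1$ is covered for all $k\geq0$, which it is. I also need the differentiability of $H_{k+1}$ at $\boldsymbol{x}_{k+1}$ to be consistent with the differentiability of $R_\star$ there. If one worried that~\eqref{eq:gradient_suboptimality} were not directly available, the fallback would be to derive it from $L_H$-smoothness along the segment together with convexity of $H_{k+1}$ in the LPN/GS realizations, via the standard inequality $\|\nabla H\|^2\leq 2L_H(H-H^{\min})$. In the abstract setting, however, the assumption supplies it verbatim.
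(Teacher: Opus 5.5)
Your proposal is correct and matches the paper's proof: set $\boldsymbol{r}_{k+1}:=-\nabla H_{k+1}(\boldsymbol{x}_{k+1})$, expand $\nabla H_{k+1}$ to obtain the identity, and apply the gradient--suboptimality inequality~\eqref{eq:gradient_suboptimality} to get $\|\boldsymbol{r}_{k+1}\|^2\leq 2L_H\epsilon_{k+1}$. Your fallback remark is not needed here; note that the standard bound $\|\nabla H\|^2\leq 2L_H(H-H^{\min})$ requires smoothness at the gradient-step point $\boldsymbol{x}_{k+1}-\nabla H_{k+1}(\boldsymbol{x}_{k+1})/L_H$, which smoothness on the segment to the minimizer does not provide.
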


\noindent\emph{Proof.}
See Appendix~\ref{app:proof-residual}.

\begin{mythm}[Convergence under proximal mismatch]
\label{thm:main_stationarity}
Suppose that Assumptions~\ref{assump:f_smooth}--\ref{assump:prox_subproblem_regularity}
hold and that \(\eta L_f<1\). Define
\begin{equation}
    C_0
    :=
    \frac{16}{1-\eta L_f}
    \big(F_\star(\boldsymbol{x}_0)-F_\star^{\inf}\big),
    \qquad
    C_1
    :=
    \left(
        \frac{16}{1-\eta L_f}
        +
        4L_H
    \right)
    \frac{L_H}{2}.
    \label{eq:constants_main_bound}
\end{equation}
Then, for all \(t\geq1\),
\begin{equation}
    \frac{1}{t}
    \sum_{k=1}^{t}
    \|\nabla F_\star(\boldsymbol{x}_k)\|^2
    \leq
    \frac{C_0}{t}
    +
    C_1
    \frac{1}{t}
    \sum_{k=1}^{t}d_k^2 .
    \label{eq:abstract_rate}
\end{equation}
Consequently,
\[
    \min_{1\leq k\leq t}
    \|\nabla F_\star(\boldsymbol{x}_k)\|^2
    \leq
    \frac{C_0}{t}
    +
    C_1
    \frac{1}{t}
    \sum_{k=1}^{t}d_k^2 .
\]
If \(d_k\leq\Delta_k\), the same bounds hold with \(d_k\) replaced by
\(\Delta_k\). If \(\sum_{k=1}^{\infty}d_k^2<\infty\), then
\[
    \|\nabla F_\star(\boldsymbol{x}_k)\|\to0 .
\]
In particular, this holds whenever \(d_k^2=O(k^{-1-\nu})\) for some
\(\nu>0\).
\end{mythm}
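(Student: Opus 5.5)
The argument combines Lemma~\ref{lemma:abstract_descent}, Lemma~\ref{lemma:abstract_residual} and Lemma~\ref{lemma:bridge_discrepancy} in the usual telescoping way for proximal gradient methods. First, write the target gradient at a deployed iterate via Lemma~\ref{lemma:abstract_residual}. Using $\boldsymbol{z}_{k}=\boldsymbol{x}_{k-1}-\eta\nabla f(\boldsymbol{x}_{k-1})$, we get
\[
\nabla F_\star(\boldsymbol{x}_k)=\eta\nabla f(\boldsymbol{x}_k)+\nabla R_\star(\boldsymbol{x}_k)
=(\boldsymbol{x}_{k-1}-\boldsymbol{x}_k)+\eta\big(\nabla f(\boldsymbol{x}_k)-\nabla f(\boldsymbol{x}_{k-1})\big)-\boldsymbol{r}_k .
\]
By Assumption~\ref{assump:f_smooth} and $\eta L_f<1$, the middle term has norm at most $\|\boldsymbol{x}_k-\boldsymbol{x}_{k-1}\|$. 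Applying $\|a+b+c\|^2\le$ a weighted sum of squares then gives
\[
\|\nabla F_\star(\boldsymbol{x}_k)\|^2\le c\,\|\boldsymbol{x}_k-\boldsymbol{x}_{k-1}\|^2+c'\|\boldsymbol{r}_k\|^2,
\qquad \|\boldsymbol{r}_k\|^2\le 2L_H\epsilon_k .
\]
I will choose the numerical constants so that they reproduce $C_0$ and $C_1$. For instance, $(2+2)\cdot 2=8$ on the squared step and a factor $2$ on $\|\boldsymbol{r}_k\|^2$, giving $4L_H\epsilon_k$. The stated 16 leaves slack, so any crude Young split suffices.

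Second, sum Lemma~\ref{lemma:abstract_descent} from $k=0$ to $t-1$ and use Assumption~\ref{assump:lower_bound}. This telescopes to
\[
\frac{1-\eta L_f}{2}\sum_{k=1}^t\|\boldsymbol{x}_k-\boldsymbol{x}_{k-1}\|^2\le F_\star(\boldsymbol{x}_0)-F_\star^{\inf}+\sum_{k=1}^t\epsilon_k .
\]
Multiplying by the coefficient from step one converts the step-length sum into $\frac{16}{1-\eta L_f}\big(F_\star(\boldsymbol{x}_0)-F_\star^{\inf}+\sum\epsilon_k\big)$, up to the chosen constant.

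Third, combine the two steps, summing over $k=1,\dots,t$. The total coefficient on $\sum_k\epsilon_k$ is $\frac{16}{1-\eta L_f}+4L_H$. Lemma~\ref{lemma:bridge_discrepancy} (valid uniformly by Assumption~\ref{assump:prox_subproblem_regularity}) gives $\epsilon_k\le \frac{L_H}{2}d_k^2$. Dividing by $t$ yields \eqref{eq:abstract_rate} with exactly $C_0$ and $C_1$. The min bound follows because a minimum is at most the average. Replacing $d_k$ by $\Delta_k$ is monotone.

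For the asymptotic claim, the same summed inequality shows $\sum_{k\ge1}\|\nabla F_\star(\boldsymbol{x}_k)\|^2\le C_0+C_1\sum_k d_k^2<\infty$. Hence the terms tend to zero. The case $d_k^2=O(k^{-1-\nu})$ is summable.

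The main obstacle is bookkeeping rather than ideas. The first-iterate term $\boldsymbol{x}_0$ needs care: Lemma~\ref{lemma:abstract_residual} is only invoked for $k\ge1$. One must also check that the constants absorb the factor from $\eta\|\nabla f(\boldsymbol{x}_k)-\nabla f(\boldsymbol{x}_{k-1})\|\le\eta L_f\|\cdot\|<\|\cdot\|$. I would first try the split $\|a+b+c\|^2\le 4\|a\|^2+4\|b\|^2+2\|c\|^2$, which is valid since $\tfrac14+\tfrac14+\tfrac12=1$. With $\|a\|,\|b\|\le\|\boldsymbol{x}_k-\boldsymbol{x}_{k-1}\|$, this gives $8\|\boldsymbol{x}_k-\boldsymbol{x}_{k-1}\|^2+2\|\boldsymbol{r}_k\|^2$. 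Then $8\cdot\frac{2}{1-\eta L_f}=\frac{16}{1-\eta L_f}$ matches $C_0$, and $2\cdot 2L_H=4L_H$ matches $C_1$.
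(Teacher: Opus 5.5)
Your proposal is correct and follows the paper's proof essentially step for step: express $\nabla F_\star(\boldsymbol{x}_k)$ through the residual lemma, bound its square by $\|\boldsymbol{x}_k-\boldsymbol{x}_{k-1}\|^2$ and $\|\boldsymbol{r}_k\|^2$, invoke the descent lemma, telescope using the lower bound, and substitute $\epsilon_k\le\frac{L_H}{2}d_k^2$. The only difference is cosmetic. The paper applies the triangle inequality to get $(1+\eta L_f)\|\boldsymbol{x}_k-\boldsymbol{x}_{k-1}\|$, then uses $2\|u\|^2+2\|v\|^2$ and $(1+\eta L_f)^2\le 4$, whereas your weighted split $4,4,2$ yields the same coefficients $8$ and $2$ directly. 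Note that the weight on $\|\boldsymbol{r}_k\|^2$ must be at most $2$, and the step coefficient $8$ is tight as $\eta L_f\to1$, so not every crude split would reproduce $C_1$, contrary to your remark about slack.
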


\noindent\emph{Proof.}
See Appendix~\ref{app:proof-main}.

\begin{mycor}[Matched target proximal denoiser]
\label{cor:exact_target}
If \(\widehat{\mathsf{D}}=\mathsf{D}_\star\), then \(d_k=0\) for all \(k\), and
Theorem~\ref{thm:main_stationarity} gives
\begin{equation}
    \frac{1}{t}
    \sum_{k=1}^{t}
    \|\nabla F_\star(\boldsymbol{x}_k)\|^2
    \leq
    \frac{16}{(1-\eta L_f)t}
    \big(
        F_\star(\boldsymbol{x}_0)-F_\star^{\inf}
    \big).
    \label{eq:exact_rate}
\end{equation}
\end{mycor}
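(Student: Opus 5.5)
The corollary is a direct specialization of Theorem~\ref{thm:main_stationarity}, so the plan is to check its hypotheses in the matched case and then substitute. If \(\widehat{\mathsf D}=\mathsf D_\star\), then for every \(k\geq1\) the deployed iterate is \(\boldsymbol{x}_k=\widehat{\mathsf D}(\boldsymbol{z}_k)=\mathsf D_\star(\boldsymbol{z}_k)\). Hence \(\boldsymbol{\xi}_k=0\) by~\eqref{eq:ek_def}, and \(d_k=0\) by~\eqref{eq:dk_def}. The proximal subproblem error also vanishes: \(\epsilon_k=H_k(\mathsf D_\star(\boldsymbol{z}_k))-H_k(\mathsf D_\star(\boldsymbol{z}_k))=0\), which is consistent with Lemma~\ref{lemma:bridge_discrepancy}.

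We keep the standing hypotheses of the theorem: Assumptions~\ref{assump:f_smooth}--\ref{assump:prox_subproblem_regularity} and \(\eta L_f<1\). In the matched case, inequality~\eqref{eq:gradient_suboptimality} holds trivially. Its right-hand side is zero, and the left-hand side is \(\|\nabla H_k(\mathsf D_\star(\boldsymbol{z}_k))\|^2=0\), because \(\mathsf D_\star(\boldsymbol{z}_k)\) is the unique minimizer of the differentiable function \(H_k\).

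Applying~\eqref{eq:abstract_rate} with \(d_k=0\) for all \(k\), the term \(C_1\frac1t\sum_{k=1}^t d_k^2\) drops out, leaving
\[
\frac1t\sum_{k=1}^t\|\nabla F_\star(\boldsymbol{x}_k)\|^2\leq\frac{C_0}{t}.
\]
Substituting the definition of \(C_0\) from~\eqref{eq:constants_main_bound} gives exactly~\eqref{eq:exact_rate}.

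The argument has no real obstacle. The only point needing care is that the theorem's assumptions stay meaningful when \(\widehat{\mathsf D}=\mathsf D_\star\); as shown above, they become vacuous or trivially satisfied, so nothing further is required.
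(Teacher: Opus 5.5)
Your proposal is correct and follows the same route as the paper. Setting $\widehat{\mathsf D}=\mathsf D_\star$ gives $d_k=0$ and $\epsilon_k=0$, so the $C_1$ term in Theorem~\ref{thm:main_stationarity} vanishes and $C_0/t$ is exactly the claimed bound; the only cosmetic difference is that you read off $\epsilon_k=0$ directly from its definition (and check~\eqref{eq:gradient_suboptimality} explicitly), whereas the paper obtains it from Lemma~\ref{lemma:bridge_discrepancy} together with $\epsilon_k\geq0$.
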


\noindent\emph{Proof.}
See Appendix~\ref{app:proof-exact-target}.

\subsection{Structured Proximal-Denoiser Realizations}
\label{subsec:proximal_realizations}

The preceding result applies whenever the target prior admits a proximal reference satisfying the regularity conditions above. We instantiate this abstract setting with two structured proximal denoiser families: Learned Proximal Networks, which provide architecture-level proximal maps, and Gradient-Step proximal denoisers, which provide proximal maps under contractivity conditions. Detailed proximal
characterizations and derivations are provided in
Appendix~\ref{app:structured_proximal_realizations}.

\subsubsection{Learned Proximal Networks}

Following~\citet{fang2024whats}, an LPN is parameterized as
\begin{equation}
    \mathsf{D}_\theta(\boldsymbol{x})
    =
    \nabla\Psi_\theta(\boldsymbol{x}),
    \qquad
    \Psi_\theta(\boldsymbol{x})
    =
    \psi_\theta(\boldsymbol{x})
    +
    \frac{\alpha}{2}\|\boldsymbol{x}\|^2,
    \qquad
    \alpha\in(0,1),
    \label{eq:LPN_def}
\end{equation}
where \(\psi_\theta\) is an input-convex neural network with
\(\mathcal{C}^2\) activations. Such a network is an exact proximal map
\(\mathsf{D}_\theta=\operatorname{prox}_{R_\theta}\), and its inverse is
\(1/\alpha\)-Lipschitz; see
Proposition~\ref{prop:lpn} in the appendix.

\noindent
For the target denoiser
\(\mathsf{D}_\star=\mathsf{D}_{\theta^\star}\), the associated proximal
subproblem satisfies, up to additive constants,
\[
    H_k(\boldsymbol{u})
    =
    \Psi_{\theta^\star}^\star(\boldsymbol{u})
    -
    \langle\boldsymbol{z}_k,\boldsymbol{u}\rangle.
\]
Consequently, \(H_k\) is convex and \(1/\alpha\)-smooth. Therefore,
Assumption~\ref{assump:prox_subproblem_regularity} holds with
\[
    L_H=\frac{1}{\alpha},
\]
and Lemma~\ref{lemma:bridge_discrepancy} yields
\begin{equation}
    \epsilon_k
    \leq
    \frac{1}{2\alpha}
    \big\|
        \widehat{\mathsf{D}}_\theta(\boldsymbol{z}_k)
        -
        \mathsf{D}_{\theta^\star}(\boldsymbol{z}_k)
    \big\|^2.
    \label{eq:lpn_eps}
\end{equation}
Hence Theorem~\ref{thm:main_stationarity} applies with
\[
    d_k
    =
    \big\|
        \widehat{\mathsf{D}}_\theta(\boldsymbol{z}_k)
        -
        \mathsf{D}_{\theta^\star}(\boldsymbol{z}_k)
    \big\|.
\]

\subsubsection{Gradient-Step Proximal Denoisers}

GS denoisers take the form
\begin{equation}
    \mathsf{D}_\sigma
    =
    \nabla h_\sigma
    =
    \operatorname{Id}-\nabla g_\sigma,
    \qquad
    h_\sigma(\boldsymbol{x})
    =
    \frac{1}{2}\|\boldsymbol{x}\|^2-g_\sigma(\boldsymbol{x}).
    \label{eq:gs_def}
\end{equation}
Suppose that \(\nabla g_{\sigma,\star}\) is \(L\)-Lipschitz with \(L<1\).
Then the target GS denoiser
\(\mathsf{D}_{\sigma,\star}\) is a proximal map on its image, and its
inverse is \(1/(1-L)\)-Lipschitz; see
Proposition~\ref{prop:gs_prox} in the appendix.

\noindent
Assume additionally that
\[
    \widehat{\mathsf{D}}_\sigma(\boldsymbol{z}_k)
    \in
    \operatorname{Im}(\mathsf{D}_{\sigma,\star})
\]
at every visited query point. On this set, the associated proximal
subproblem satisfies, up to additive constants,
\[
    H_k(\boldsymbol{u})
    =
    h_{\sigma,\star}^\star(\boldsymbol{u})
    -
    \langle\boldsymbol{z}_k,\boldsymbol{u}\rangle.
\]
Thus \(H_k\) is convex and \(1/(1-L)\)-smooth, so that
\[
    L_H=\frac{1}{1-L}.
\]
Accordingly,
\begin{equation}
    \epsilon_k
    \leq
    \frac{1}{2(1-L)}
    \big\|
        \widehat{\mathsf{D}}_\sigma(\boldsymbol{z}_k)
        -
        \mathsf{D}_{\sigma,\star}(\boldsymbol{z}_k)
    \big\|^2,
    \label{eq:gs_eps}
\end{equation}
and Theorem~\ref{thm:main_stationarity} applies with
\[
    d_k
    =
    \big\|
        \widehat{\mathsf{D}}_\sigma(\boldsymbol{z}_k)
        -
        \mathsf{D}_{\sigma,\star}(\boldsymbol{z}_k)
    \big\|.
\]
LPNs and GS denoisers therefore instantiate the same proximal-mismatch
theorem through different mechanisms: LPNs yield globally defined proximal
maps by construction, whereas GS denoisers yield proximal maps on the image
of the target denoiser under the condition \(L<1\).

\section{Proximal Matching Adaptation}
\label{sec:adaptation_strategies}

Theorem~\ref{thm:main_stationarity} shows that target stationarity depends on the average deployed-to-target proximal discrepancy along the PnP-PGD trajectory. This result motivates adapting the deployed denoiser toward the target proximal map. In practice, direct evaluations of the target proximal map are unavailable during few-shot adaptation. We instead use a clean target-domain set
\[
    \mathcal S_{\mathrm{tar}}^n
    =
    \{\boldsymbol{x}_i\}_{i=1}^{n_{\mathrm{adapt}}}
\]
and generate Gaussian-corrupted inputs
\begin{equation}
    \boldsymbol{y}_{i,j}
    =
    \boldsymbol{x}_i
    +
    \sigma_{\mathrm{adapt}}\boldsymbol{\delta}_{i,j},
    \qquad
    \boldsymbol{\delta}_{i,j}\sim\mathcal N(0,I),
    \qquad
    j=1,\ldots,M.
    \label{eq:adaptation_inputs}
\end{equation}
We first compare the population targets of squared-error regression and proximal matching. We then apply proximal matching to LPN and GS denoisers.

\subsection{Population Targets of MSE and Proximal Matching}
\label{subsec:population_targets}

Let
\[
    X\sim p_{\mathrm{tar}},
    \qquad
    Y=X+\sigma_{\mathrm{adapt}}\delta,
    \qquad
    \delta\sim\mathcal N(0,I).
\]
Then
\[
    p_{\mathrm{tar}}(x\mid y)
    \propto
    \exp
    \left(
        -\frac{\|y-x\|_2^2}{2\sigma_{\mathrm{adapt}}^2}
    \right)
    p_{\mathrm{tar}}(x).
\]
For \(R_{\mathrm{tar}}(x)=-\log p_{\mathrm{tar}}(x)\), the MAP denoiser is
\begin{equation}
    \mathsf{D}_{\mathrm{MAP}}(y)
    \in
    \arg\min_x
    \left\{
        \frac{1}{2}\|x-y\|_2^2
        +
        \sigma_{\mathrm{adapt}}^2R_{\mathrm{tar}}(x)
    \right\}
    =
    \operatorname{prox}_{\sigma_{\mathrm{adapt}}^2R_{\mathrm{tar}}}(y).
    \label{eq:map_as_prox}
\end{equation}

\noindent
For \(u,v\in\mathbb R^d\), define
\begin{equation}
    \rho_\gamma(u,v)
    =
    1-\exp
    \left(
        -\frac{\|u-v\|_2^2}{2\gamma^2}
    \right),
    \qquad
    \gamma>0.
    \label{eq:pm_pairwise_loss}
\end{equation}

\begin{myprop}[Population targets]
\label{prop:pm_vs_mse}
Let \(\mathsf{D}:\mathbb R^d\to\mathbb R^d\) range over all measurable
denoisers, and assume that \(\mathbb E\|X\|_2^2<\infty\). The population MSE
risk
\begin{equation}
    \mathcal R_{\mathrm{MSE}}(\mathsf{D})
    =
    \mathbb E
    \left[
        \|\mathsf{D}(Y)-X\|_2^2
    \right]
    \label{eq:population_mse_risk}
\end{equation}
is minimized by
\begin{equation}
    \mathsf{D}_{\mathrm{MSE}}^\star(y)
    =
    \mathbb E[X\mid Y=y].
    \label{eq:posterior_mean_target}
\end{equation}
The population PM risk
\begin{equation}
    \mathcal R_{\mathrm{PM},\gamma}(\mathsf{D})
    =
    \mathbb E
    \left[
        \rho_\gamma(\mathsf{D}(Y),X)
    \right]
    \label{eq:population_pm_risk}
\end{equation}
has conditional minimizers
\begin{equation}
    \mathsf{D}_{\mathrm{PM},\gamma}^\star(y)
    \in
    \arg\max_z
    \mathbb E
    \left[
        \exp
        \left(
            -\frac{\|z-X\|_2^2}{2\gamma^2}
        \right)
        \,\middle|\,
        Y=y
    \right].
    \label{eq:kernel_smoothed_posterior}
\end{equation}
Thus, PM targets the mode of a Gaussian-smoothed posterior. Under the
consistency conditions of~\citet{fang2024whats}, if the posterior mode is
unique, then
\begin{equation}
    \mathsf{D}_{\mathrm{PM},\gamma}^\star(y)
    \longrightarrow
    \mathsf{D}_{\mathrm{MAP}}(y)
    \qquad
    \text{as }
    \gamma\downarrow0.
    \label{eq:pm_mode_target}
\end{equation}
\end{myprop}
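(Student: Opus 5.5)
The plan is to use the standard pointwise (conditional) decomposition of a population risk over measurable maps. Two facts do most of the work. First, by the tower property, $\mathcal R(\mathsf D)=\mathbb E\big[\,\mathbb E[\ell(\mathsf D(Y),X)\mid Y]\,\big]$ for any nonnegative loss $\ell$. Second, a measurable map that minimizes the inner conditional risk for ($P_Y$-almost) every $y$ minimizes the outer risk. Indeed, for any measurable $\mathsf D$ we have $\mathbb E[\ell(\mathsf D(Y),X)\mid Y=y]\ge\inf_z \mathbb E[\ell(z,X)\mid Y=y]$ pointwise, and integrating over $P_Y$ gives the claim. I will work with a regular conditional distribution of $X$ given $Y$, which exists because we are in $\mathbb R^d$.

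For the MSE part, I set $\ell(z,x)=\|z-x\|_2^2$. Since $\mathbb E\|X\|_2^2<\infty$, the conditional second moment is finite for a.e. $y$, and the bias–variance identity gives
\[
\mathbb E[\|z-X\|_2^2\mid Y=y]=\|z-m(y)\|_2^2+\mathbb E[\|X-m(y)\|_2^2\mid Y=y],
\qquad m(y)=\mathbb E[X\mid Y=y].
\]
This is uniquely minimized at $z=m(y)$. The map $m$ is measurable as a conditional expectation, so $\mathsf D^\star_{\mathrm{MSE}}=m$ attains the infimum. Equivalently, one can expand $\mathbb E\|\mathsf D(Y)-X\|^2=\mathbb E\|\mathsf D(Y)-m(Y)\|^2+\mathbb E\|m(Y)-X\|^2$, which follows from orthogonality of $X-m(Y)$ to $L^2(\sigma(Y))$.

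For the PM part, I set $\ell=\rho_\gamma$. This loss is bounded in $[0,1]$, so no moment condition is needed. Pointwise,
\[
\mathbb E[\rho_\gamma(z,X)\mid Y=y]=1-\mathbb E\big[e^{-\|z-X\|^2/(2\gamma^2)}\mid Y=y\big],
\]
so the conditional minimizers are exactly the maximizers in \eqref{eq:kernel_smoothed_posterior}. To confirm the argmax is nonempty, I note that $G_y(z)=\mathbb E[e^{-\|z-X\|^2/2\gamma^2}\mid Y=y]$ is continuous by dominated convergence, positive, and tends to $0$ as $\|z\|\to\infty$, so its supremum is attained. A measurable selection of the maximizer is then available via a standard measurable selection theorem, for instance Kuratowski–Ryll-Nardzewski applied to the closed-valued argmax correspondence of a Carathéodory function. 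I will cite this rather than reprove it.

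The interpretation step is a short computation. $G_y(z)$ is, up to the constant $(2\pi\gamma^2)^{d/2}$, the density at $z$ of the posterior convolved with $\mathcal N(0,\gamma^2 I)$. Hence the PM target is the mode of a Gaussian-smoothed posterior.

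The main obstacle is the limit \eqref{eq:pm_mode_target}. Proving it fully would require conditions ensuring that the modes of the smoothed density converge to the posterior mode: continuity of the posterior density, uniqueness of the mode, and tightness or coercivity so that maximizers stay bounded. The statement explicitly defers to the consistency conditions of \citet{fang2024whats}, so I will invoke their result directly. I will then combine it with \eqref{eq:map_as_prox}: the posterior mode equals $\arg\min_x\{\tfrac12\|x-y\|^2+\sigma_{\mathrm{adapt}}^2R_{\mathrm{tar}}(x)\}=\mathsf D_{\mathrm{MAP}}(y)$. If a self-contained sketch is wanted, the route is as follows. For $\gamma\downarrow0$, $(2\pi\gamma^2)^{-d/2}G_y$ converges to the posterior density locally uniformly, assuming continuity. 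Together with uniform decay at infinity, an argmax-continuity (epi-convergence) argument then shows that any cluster point of the maximizers is a maximizer of the posterior density, and uniqueness of that maximizer gives convergence.
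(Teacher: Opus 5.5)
Your proposal is correct and follows essentially the same route as the paper's proof: you use the conditional bias--variance decomposition for the MSE part, rewrite the conditional PM risk as one minus the Gaussian-kernel expectation (a Gaussian smoothing of the posterior), defer the $\gamma\downarrow0$ limit to the consistency result of \citet{fang2024whats}, and identify the posterior mode with $\operatorname{prox}_{\sigma_{\mathrm{adapt}}^2R_{\mathrm{tar}}}(y)$ via Bayes' rule. Your additional details are sound and make explicit what the paper leaves implicit, namely the tower-property justification that pointwise conditional minimizers minimize the population risk, existence of the maximizer from continuity and decay of $G_y$, and measurable selection.
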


\begin{proof}
See Appendix~\ref{proof:prop_pm_vs_mse}.
\end{proof}
MSE targets the posterior mean, whereas PM approaches the MAP/proximal
denoiser. This motivates using PM to reduce proximal mismatch.

\subsection{LPN Adaptation}
\label{subsec:lpn_adaptation}

Let \(\mathsf{D}_{\theta_s}\) be a source-trained LPN. We fine-tune the same
proximal parameterization,
\[
    \mathsf{D}_\theta=\nabla\Psi_\theta,
\]
on the target adaptation set.

\paragraph{Clean-MSE adaptation.}
The MSE baseline minimizes
\begin{equation}
    \mathcal L_{\mathrm{MSE}}(\theta)
    =
    \frac{1}{n_{\mathrm{adapt}}M}
    \sum_{i=1}^{n_{\mathrm{adapt}}}
    \sum_{j=1}^{M}
    \left\|
        \mathsf{D}_\theta(\boldsymbol{y}_{i,j})
        -
        \boldsymbol{x}_i
    \right\|_2^2.
    \label{eq:mse_adaptation_loss}
\end{equation}
Starting from \(\theta_s\), we obtain
\begin{equation}
    \theta_{\mathrm{MSE}}
    \in
    \arg\min_{\theta\in\Theta_{\mathrm{LPN}}}
    \mathcal L_{\mathrm{MSE}}(\theta).
    \label{eq:mse_adapted_parameter}
\end{equation}

\paragraph{Proximal-matching adaptation.}
PM minimizes
\begin{equation}
    \mathcal L_{\mathrm{PM}}(\theta;\gamma)
    =
    \frac{1}{n_{\mathrm{adapt}}M}
    \sum_{i=1}^{n_{\mathrm{adapt}}}
    \sum_{j=1}^{M}
    \rho_\gamma
    \left(
        \mathsf{D}_\theta(\boldsymbol{y}_{i,j}),
        \boldsymbol{x}_i
    \right).
    \label{eq:pm_adaptation_loss}
\end{equation}
Starting from \(\theta_s\), we obtain
\begin{equation}
    \theta_{\mathrm{PM}}
    \in
    \arg\min_{\theta\in\Theta_{\mathrm{LPN}}}
    \mathcal L_{\mathrm{PM}}(\theta;\gamma).
    \label{eq:pm_adapted_parameter}
\end{equation}
The bandwidth \(\gamma\) may be annealed during fine-tuning. Both objectives
preserve the LPN parameterization but have different population targets.

\subsection{Gradient-Step Proximal Adaptation}
\label{subsec:gs_adaptation}

A GS denoiser has the form
\begin{equation}
    \mathsf{D}_\theta^{\mathrm{GS}}(\boldsymbol{u})
    =
    \boldsymbol{u}
    -
    \nabla g_\theta(\boldsymbol{u}),
    \label{eq:gs_denoiser_adapt}
\end{equation}
where \(g_\theta:\mathbb R^d\to\mathbb R\) is differentiable. Its proximal
interpretation requires \(\nabla g_\theta\) to be contractive.

\paragraph{Clean-MSE adaptation.}
The GS MSE baseline is
\begin{equation}
    \mathcal L_{\mathrm{GS\text{-}MSE}}(\theta)
    =
    \frac{1}{n_{\mathrm{adapt}}M}
    \sum_{i=1}^{n_{\mathrm{adapt}}}
    \sum_{j=1}^{M}
    \left\|
        \mathsf{D}_\theta^{\mathrm{GS}}(\boldsymbol{y}_{i,j})
        -
        \boldsymbol{x}_i
    \right\|_2^2.
    \label{eq:gs_mse}
\end{equation}

\paragraph{Gradient-field proximal matching.}
Because
\[
    \mathsf{D}_\theta^{\mathrm{GS}}(\boldsymbol{y}_{i,j})
    -
    \boldsymbol{x}_i
    =
    -
    \left[
        \nabla g_\theta(\boldsymbol{y}_{i,j})
        -
        (\boldsymbol{y}_{i,j}-\boldsymbol{x}_i)
    \right],
\]
output matching is equivalent to matching the gradient field to the corruption
residual. We define
\begin{equation}
    \mathcal L_{\mathrm{GS\text{-}PM}}(\theta;\gamma)
    =
    \frac{1}{n_{\mathrm{adapt}}M}
    \sum_{i=1}^{n_{\mathrm{adapt}}}
    \sum_{j=1}^{M}
    \rho_\gamma
    \left(
        \nabla g_\theta(\boldsymbol{y}_{i,j}),
        \boldsymbol{y}_{i,j}-\boldsymbol{x}_i
    \right).
    \label{eq:gs_pm}
\end{equation}

\paragraph{Contractivity penalty.}
Let \(\widehat L(\nabla g_\theta)\) estimate the Lipschitz constant on the
adaptation inputs. For \(L_{\max}<1\), define
\begin{equation}
    \mathcal R_{\mathrm{con}}(\theta)
    =
    \left[
        \widehat L(\nabla g_\theta)-L_{\max}
    \right]_+^2.
    \label{eq:gs_contractivity_regularizer}
\end{equation}
The final objective is
\begin{equation}
    \mathcal L_{\mathrm{GS\text{-}AdaPM}}(\theta)
    =
    \mathcal L_{\mathrm{GS\text{-}PM}}(\theta;\gamma)
    +
    \lambda_{\mathrm{con}}
    \mathcal R_{\mathrm{con}}(\theta).
    \label{eq:gs_adapm}
\end{equation}
The PM term adapts the gradient field. The penalty promotes the contractivity
condition required by the GS proximal interpretation.

\section{Numerical Experiments}

Our experiments address three questions. First, how much does deploying a source-trained denoiser degrade PnP-PGD reconstruction on the target domain? Second, in the few-shot regime, does proximal-matching adaptation outperform MSE-based adaptation? Third, are reconstruction gains accompanied by lower proximal mismatch along the PnP-PGD trajectory?

\noindent
We study these questions under a substantial BreCaHAD~\citep{liu2015deep}-to-CelebA~\citep{aksac2019brecahad} domain shift for Gaussian deblurring and single-image super-resolution. We consider two structured denoiser families: LPNs and GS denoisers. For each family, we compare the deployed denoiser with a target-trained proximal reference from the same family and evaluate both reconstruction performance and on-trajectory proximal mismatch.

\subsection{Experimental Setup}

\paragraph{Data and domain shift.}
We train the source denoisers on BreCaHAD and perform few-shot adaptation and reconstruction on CelebA~\citep{aksac2019brecahad,liu2015deep}. BreCaHAD contains breast histopathology images with cellular morphology and stain-dependent appearance, whereas CelebA contains aligned face images. This source-to-target shift changes both semantic content and low-level image statistics.

\noindent
For each inverse problem, we evaluate all methods on the same 100 images from the CelebA test set. The target-domain adaptation set is disjoint from the test set and contains
\[
    n_{\mathrm{adapt}}
    \in
    \{1,5,25,50\}
\]
images. We report the mean and standard deviation of the reconstruction metrics across the 100 test images. Figure~\ref{fig:stage2-source-target-domain-gap} shows representative images from the source and target domains.

\begin{figure}[t]
\centering
\includegraphics[width=0.75\linewidth]{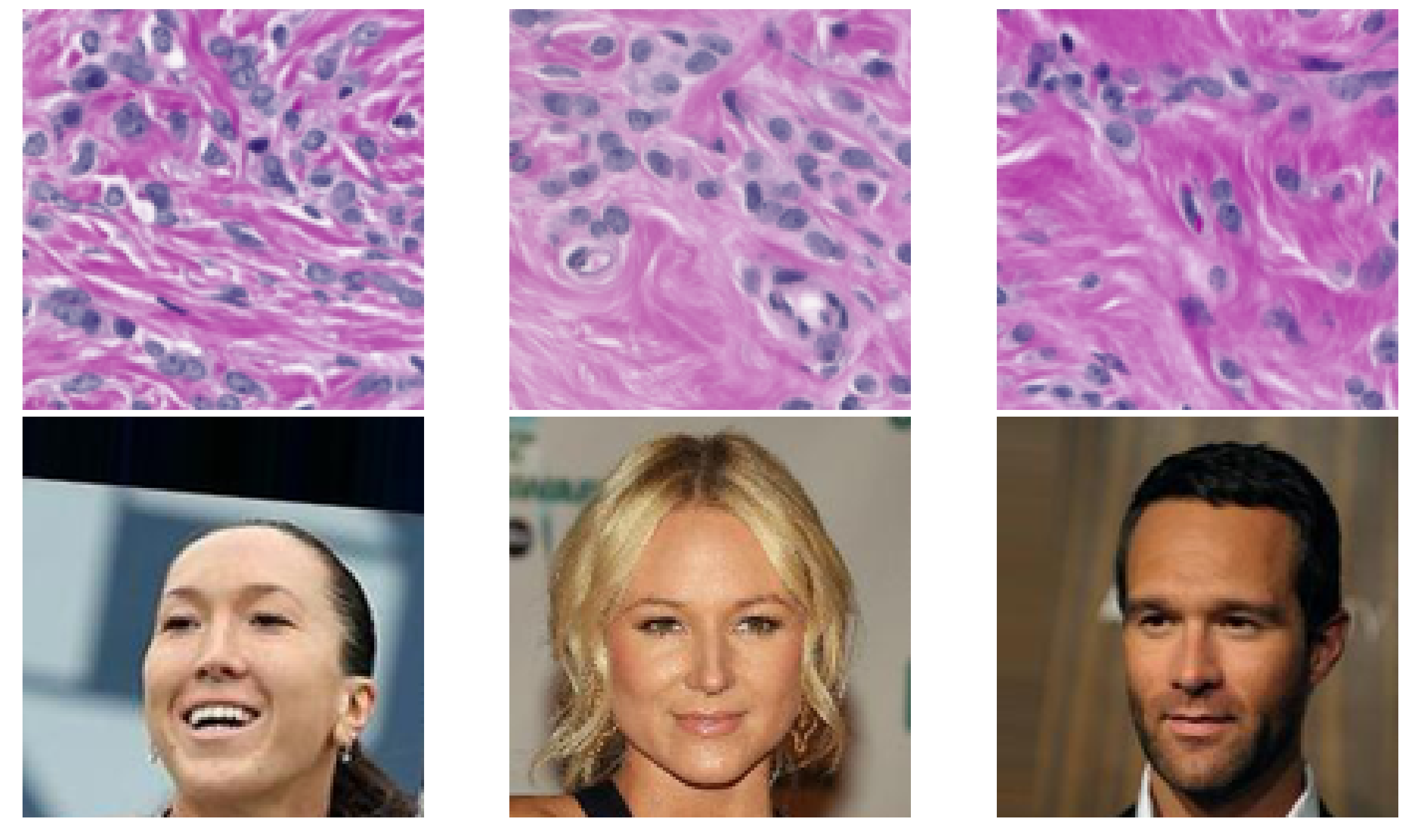}
\caption{Representative images from the source and target domains. BreCaHAD contains histopathology patches, whereas CelebA contains aligned facial images, producing a substantial source-target shift in both semantic content and visual statistics.}
\label{fig:stage2-source-target-domain-gap}
\end{figure}

\paragraph{Gaussian deblurring.}
We generate blurred observations from RGB images
$\boldsymbol{x}\in[0,1]^{3\times128\times128}$ according to
\begin{equation}
    \boldsymbol{y}
    =
    A_{\mathrm{db}}\boldsymbol{x}
    +
    \boldsymbol{e},
    \qquad
    A_{\mathrm{db}}
    =
    H_g,
    \label{eq:deblurring-forward-model}
\end{equation}
where $H_g$ applies a $5\times5$ Gaussian kernel with standard deviation
$1.0$ independently to each color channel, and
$\boldsymbol{e}\sim\mathcal N(\boldsymbol{0},0.02^2I)$.
We use the data-fidelity term
\[
    f_{\mathrm{db}}(\boldsymbol{z})
    =
    \frac{1}{2}
    \|A_{\mathrm{db}}\boldsymbol{z}-\boldsymbol{y}\|_2^2.
\]
For the implemented operator,
$L_{\mathrm{db}}=\|A_{\mathrm{db}}\|_2^2=1$.
We run PnP-PGD for $40$ iterations with
$\eta_{\mathrm{db}}=0.95$ and initialize
$\boldsymbol{x}_0=A_{\mathrm{db}}^\top\boldsymbol{y}$.
Thus, $\eta_{\mathrm{db}}L_{\mathrm{db}}=0.95<1$.

\paragraph{Single-image super-resolution.}
For $\times4$ super-resolution, we use
\begin{equation}
    \boldsymbol{y}
    =
    A_{\mathrm{sr}}\boldsymbol{x},
    \qquad
    A_{\mathrm{sr}}
    =
    S_4H_{\sigma_{\mathrm{sr}}},
    \label{eq:sr-forward-model}
\end{equation}
where $H_{\sigma_{\mathrm{sr}}}$ applies a normalized
$25\times25$ Gaussian anti-aliasing kernel with
$\sigma_{\mathrm{sr}}=1.6$ independently to each color channel, and
$S_4$ downsamples the result by a factor of four in each spatial dimension.
This operator maps an RGB $128\times128$ image to an RGB $32\times32$
measurement. We add no measurement noise. The data-fidelity term is
\[
    f_{\mathrm{sr}}(\boldsymbol{z})
    =
    \frac{1}{2}
    \|A_{\mathrm{sr}}\boldsymbol{z}-\boldsymbol{y}\|_2^2.
\]
We initialize PnP-PGD by bicubic upsampling followed by pixel-shift correction.
For the implemented operator,
$L_{\mathrm{sr}}=\|A_{\mathrm{sr}}\|_2^2\approx0.063$.
We run $40$ iterations with $\eta_{\mathrm{sr}}=15$, giving
$\eta_{\mathrm{sr}}L_{\mathrm{sr}}\approx0.945<1$.

\paragraph{Denoiser families and adaptation baselines.}
We evaluate two structured denoiser families. LPNs enforce proximal structure by construction, whereas GS denoisers admit a proximal interpretation under contractivity conditions on the learned gradient field.

\noindent
For each family, we consider four variants: a source denoiser trained on BreCaHAD, a target-reference denoiser trained on CelebA, an MSE-adapted source denoiser, and a proximal-matching-adapted source denoiser. The target-reference model serves as the proximal reference when evaluating on-trajectory mismatch.

\noindent
For GS, proximal-matching adaptation also includes the contractivity regularizer defined in Section~\ref{subsec:gs_adaptation}; we denote this model by GS-AdaPM. Within each family and adaptation budget, the MSE and proximal-matching variants use the same target images, corruption process, source initialization, and number of adaptation epochs. LPN-MSE and LPN-PM are adapted for 200 epochs, whereas GS-MSE and GS-AdaPM are adapted for 20 epochs. For each inverse problem, all denoisers are evaluated with the same PnP-PGD reconstruction settings.

\paragraph{Evaluation metrics.}
We report PSNR and SSIM for reconstruction quality. To measure proximal
mismatch, we also compute a family-matched trajectory gap. For a candidate
denoiser $\mathsf{D}$ and the target-reference denoiser $\mathsf{D}_{\theta^\star}$, we evaluate
both denoisers on the same PnP query $z_k$ at each iteration and compute
\[
\mathrm{gap}(z_k;\mathsf{D},\mathsf{D}_{\theta^\star})
=
\frac{\|\mathsf{D}(z_k)-\mathsf{D}_{\theta^\star}(z_k)\|_2}
{\max(\|\mathsf{D}_{\theta^\star}(z_k)\|_2,10^{-12})}.
\]
The reported gap is averaged over PnP iterations and test images. This metric
is always interpreted within the same denoiser family, i.e., LPN gaps compare
LPN denoisers to the LPN target reference and GS gaps compare GS denoisers to
the GS target reference. For GS models, we also report empirical local Jacobian diagnostics for $\nabla g_\theta$.

\subsection{Main Deblurring Results}

Table~\ref{tab:deblurring_lpn_gs_main} and
Figure~\ref{fig:deblurring-seed0-2x2} report the Gaussian deblurring results under the BreCaHAD-to-CelebA domain shift. The source-trained models show a large loss in reconstruction quality. LPN-source and GS-source achieve
$11.52$ and $19.75$ dB, respectively, compared with
$31.40$ and $31.75$ dB for the corresponding target-trained references. The source models also exhibit nonzero relative trajectory mismatch, showing that the reconstruction loss is accompanied by disagreement with the target proximal responses.

\begin{figure}[htp!]
    \centering
    \includegraphics[width=0.95\linewidth]{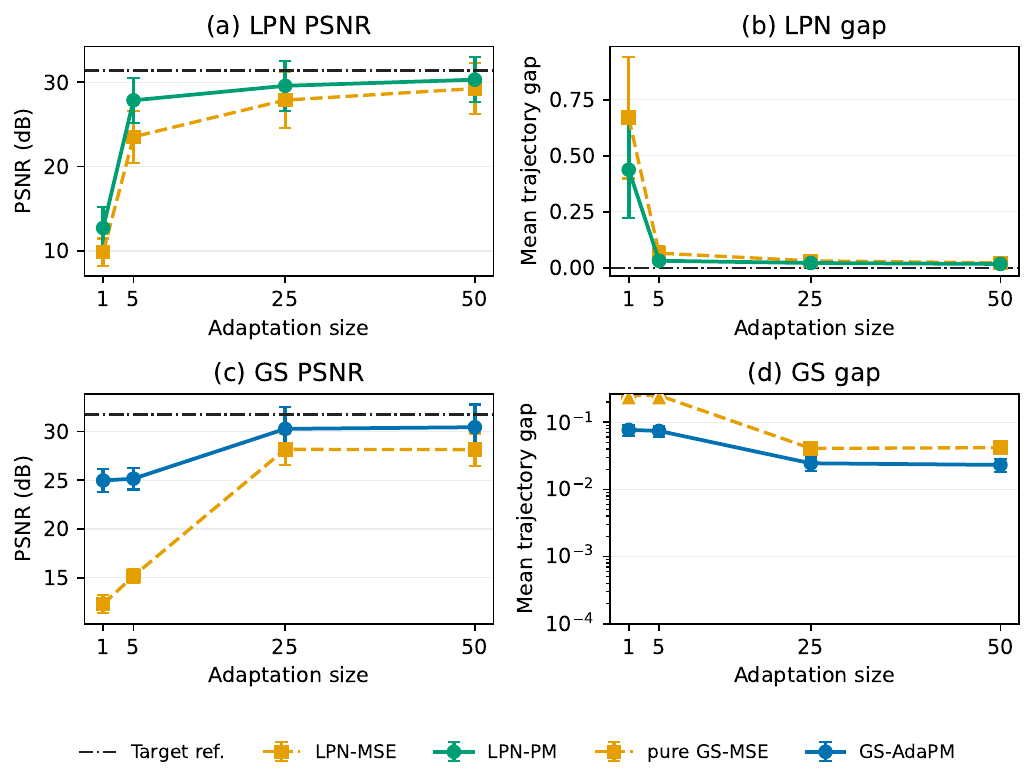}
    \caption{Gaussian deblurring performance under the BreCaHAD-to-CelebA domain shift.
    PSNR and relative trajectory mismatch are shown as functions of the number of target-domain adaptation images.
    Proximal-matching adaptation yields the largest PSNR gains in the low-data regime and reduces the mismatch with the corresponding target-trained reference.
    Higher PSNR and lower mismatch are better.}
    \label{fig:deblurring-seed0-2x2}
\end{figure}

\begin{table*}[t]
\centering
\caption{
Gaussian deblurring under severe BreCaHAD-to-CelebA proximal mismatch.
Values are mean $\pm$ standard deviation over 100 test images.
SSIM and trajectory gap (Gap) are reported in units of $10^{-2}$.
The gap is computed against the target-trained reference within each
denoiser family.
$\widehat L_{\max}$ denotes the empirical local Jacobian estimate
for GS denoisers.
}
\label{tab:deblurring_lpn_gs_main}

\sisetup{
    text-series-to-math       = true,
    uncertainty-mode          = separate,
    table-align-uncertainty   = true,
    allow-uncertainty-breaks  = false,
    table-alignment-mode      = format,
    table-number-alignment    = center,
    table-text-alignment      = center,
    table-model-setup         = \bfseries,
    group-digits              = none
}

\small
\setlength{\tabcolsep}{4.8pt}
\renewcommand{\arraystretch}{1.12}

\begin{tabular}{
    @{}
    l
    c
    S[table-format = 2.2(1.2)]
    S[table-format = 2.1(2.1)]
    S[table-format = 2.2(2.2)]
    S[table-format = 1.3]
    @{}
}
\toprule
\textbf{Method}
&
\textbf{$n_{\mathrm{adapt}}$}
&
{\textbf{PSNR $\uparrow$}}
&
{\textbf{SSIM $(10^{-2})$ $\uparrow$}}
&
{\textbf{Gap $(10^{-2})$ $\downarrow$}}
&
{\textbf{$\widehat L_{\max}$ $\downarrow$}}
\\
\midrule

\multicolumn{6}{@{}l}{\textit{LPN proximal denoiser}} \\
\cmidrule(lr){1-6}

LPN-source
& \multicolumn{1}{c}{--}
& 11.52(2.36)
& 37.3(18.9)
& 82.90(43.84)
& \multicolumn{1}{c}{--}
\\

\rowcolor{referencegray}
LPN-reference
& \multicolumn{1}{c}{--}
& 31.40(2.89)
& 89.7(4.2)
& \multicolumn{1}{c}{0}
& \multicolumn{1}{c}{--}
\\

\addlinespace[2pt]
LPN-MSE
&
& 9.86(1.61)
& 32.8(14.3)
& 67.01(27.15)
& \multicolumn{1}{c}{--}
\\

\rowcolor{methodgray}
\textbf{LPN-PM}
& \multirow{-2}{*}{$1$}
& \bfseries 12.72(2.43)
& \bfseries 50.5(14.6)
& \bfseries 43.77(21.49)
& \multicolumn{1}{c}{--}
\\

\addlinespace[2pt]
LPN-MSE
&
& 23.52(3.09)
& 77.5(6.9)
& 6.51(2.99)
& \multicolumn{1}{c}{--}
\\

\rowcolor{methodgray}
\textbf{LPN-PM}
& \multirow{-2}{*}{$5$}
& \bfseries 27.86(2.66)
& \bfseries 84.8(4.8)
& \bfseries 3.12(0.98)
& \multicolumn{1}{c}{--}
\\

\addlinespace[2pt]
LPN-MSE
&
& 27.89(3.36)
& 86.0(4.4)
& 3.08(1.56)
& \multicolumn{1}{c}{--}
\\

\rowcolor{methodgray}
\textbf{LPN-PM}
& \multirow{-2}{*}{$25$}
& \bfseries 29.58(2.94)
& \bfseries 87.8(4.1)
& \bfseries 2.11(1.03)
& \multicolumn{1}{c}{--}
\\

\addlinespace[2pt]
LPN-MSE
&
& 29.25(3.01)
& 87.4(4.1)
& 2.18(1.07)
& \multicolumn{1}{c}{--}
\\

\rowcolor{methodgray}
\textbf{LPN-PM}
& \multirow{-2}{*}{$50$}
& \bfseries 30.31(2.70)
& \bfseries 88.5(4.1)
& \bfseries 1.69(0.78)
& \multicolumn{1}{c}{--}
\\

\addlinespace[4pt]
\multicolumn{6}{@{}l}{\textit{GS proximal denoiser}} \\
\cmidrule(lr){1-6}

GS-source
& \multicolumn{1}{c}{--}
& 19.75(1.09)
& 49.9(8.6)
& 15.50(4.73)
& 1.039
\\

\rowcolor{referencegray}
GS-reference
& \multicolumn{1}{c}{--}
& 31.75(2.89)
& 91.4(3.4)
& \multicolumn{1}{c}{0}
& 0.927
\\

\addlinespace[2pt]
GS-MSE
&
& 12.29(0.94)
& 16.8(4.1)
& 48.68(11.28)
& 6.567
\\

\rowcolor{methodgray}
\textbf{GS-AdaPM}
& \multirow{-2}{*}{$1$}
& \bfseries 24.98(1.14)
& \bfseries 69.3(3.6)
& \bfseries 7.70(1.49)
& \bfseries 1.446
\\

\addlinespace[2pt]
GS-MSE
&
& 15.17(0.74)
& 25.0(4.5)
& 32.24(7.31)
& 5.077
\\

\rowcolor{methodgray}
\textbf{GS-AdaPM}
& \multirow{-2}{*}{$5$}
& \bfseries 25.17(1.11)
& \bfseries 70.0(3.5)
& \bfseries 7.41(1.31)
& \bfseries 1.005
\\

\addlinespace[2pt]
GS-MSE
&
& 28.18(1.61)
& 81.0(3.4)
& 4.08(0.85)
& 0.989
\\

\rowcolor{methodgray}
\textbf{GS-AdaPM}
& \multirow{-2}{*}{$25$}
& \bfseries 30.28(2.28)
& \bfseries 88.2(3.0)
& \bfseries 2.44(0.54)
& \bfseries 0.985
\\

\addlinespace[2pt]
GS-MSE
&
& 28.15(1.65)
& 81.1(3.4)
& 4.17(0.84)
& 0.992
\\

\rowcolor{methodgray}
\textbf{GS-AdaPM}
& \multirow{-2}{*}{$50$}
& \bfseries 30.45(2.33)
& \bfseries 88.7(3.0)
& \bfseries 2.31(0.52)
& \bfseries 0.986
\\

\bottomrule
\end{tabular}
\end{table*}

\noindent
For LPNs, proximal matching achieves higher PSNR and lower relative trajectory mismatch than MSE adaptation at every adaptation budget. The PSNR gains over LPN-MSE are
$2.86$, $4.34$, $1.69$, and $1.06$ dB for
$n_{\mathrm{adapt}}\in\{1,5,25,50\}$, respectively.
With one target image, MSE adaptation performs below the source model, whereas LPN-PM improves the source result from $11.52$ to $12.72$ dB.
The largest advantage occurs with five target images and decreases as more adaptation data become available.
This trend shows that proximal matching is most useful in the few-shot regime.

\noindent
The GS results show a stronger difference between the two adaptation objectives in the low-data regime. With one and five target images, GS-MSE drops below the source-model PSNR, while GS-AdaPM reaches
$24.98$ and $25.17$ dB. These results correspond to gains of
$12.69$ and $10.00$ dB over GS-MSE.
With 25 and 50 images, GS-AdaPM improves PSNR by
$2.10$ and $2.30$ dB, respectively.
It also produces lower relative trajectory mismatch at every adaptation budget.

\noindent
For one-shot and five-shot adaptation, the empirical Jacobian estimates of GS-AdaPM are much smaller than those of GS-MSE.
At larger adaptation budgets, both methods yield estimates close to one.
These results indicate that the contractivity-regularized proximal-matching objective better preserves the gradient-step structure when target data are limited.

\begin{figure}[htp!]
    \centering
    \includegraphics[width=1.0\linewidth]{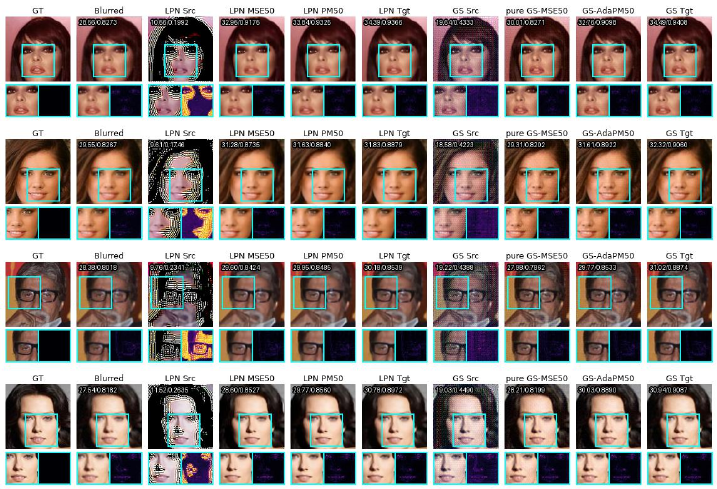}
    \caption{Representative Gaussian deblurring results for
    $n_{\mathrm{adapt}}=50$ under the BreCaHAD-to-CelebA domain shift.
    The source-trained denoisers produce visible domain-shift artifacts.
    LPN-PM and GS-AdaPM recover sharper facial contours and fewer local distortions than their MSE-adapted counterparts.
    The boxes mark the enlarged regions shown below each reconstruction.}
    \label{fig:deblurring-qualitative}
\end{figure}

\noindent
The reconstructions in
Figure~\ref{fig:deblurring-qualitative} follow the quantitative results.
The source-trained denoisers produce strong artifacts after deployment on CelebA.
Both adaptation objectives reduce these artifacts, but the proximal-matching variants recover sharper facial contours and retain more local detail than the corresponding MSE baselines.
Across both denoiser families, lower relative trajectory mismatch is associated with higher deblurring PSNR.

\subsection{Main Super-Resolution Results}

Table~\ref{tab:sr_lpn_gs_main} and
Figure~\ref{fig:SR-2x2} report the $\times4$ super-resolution results under the BreCaHAD-to-CelebA domain shift. The source-trained models show a large loss in reconstruction quality. LPN-source achieves $9.94$ dB, compared with $27.67$ dB for the target-trained LPN reference. GS-source achieves $18.08$ dB, compared with $27.67$ dB for the target-trained GS reference. Both source models also exhibit nonzero relative trajectory mismatch, showing that the reconstruction loss is accompanied by disagreement with the corresponding target proximal responses.

\begin{figure}[htp!]
    \centering
    \includegraphics[width=0.95\linewidth]{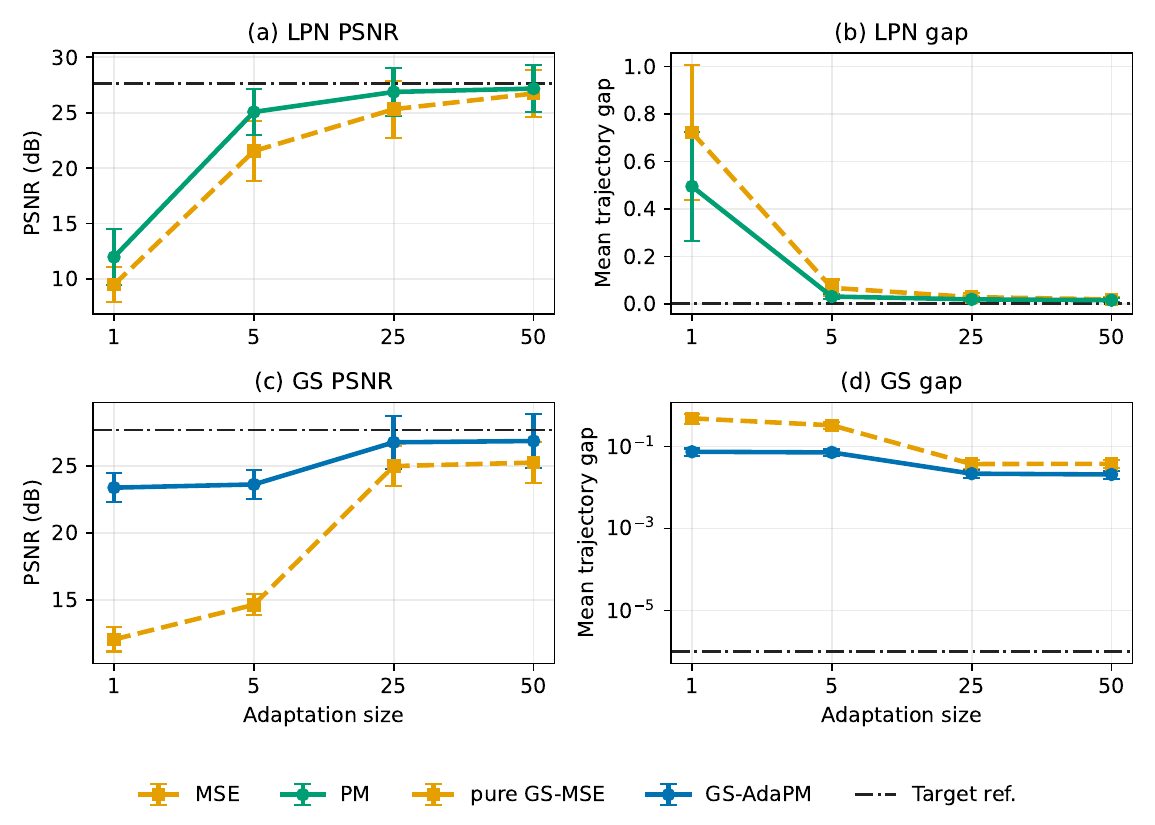}
    \caption{$\times4$ super-resolution performance under the BreCaHAD-to-CelebA domain shift.
    PSNR and relative trajectory mismatch are shown as functions of the number of target-domain adaptation images.
    Proximal-matching adaptation yields the largest PSNR gains in the low-data regime and reduces the mismatch with the corresponding target-trained reference.
    Higher PSNR and lower mismatch are better.}
    \label{fig:SR-2x2}
\end{figure}

\begin{table*}[t]
\centering
\caption{
Super-resolution results under severe BreCaHAD-to-CelebA proximal
mismatch. Values are mean $\pm$ standard deviation over 100 test
images. SSIM and trajectory gap (Gap) are reported in units of
$10^{-2}$. The gap is computed against the target-trained reference
within each denoiser family. $\widehat L_{\max}$ denotes the empirical
local Jacobian estimate for GS denoisers.
}
\label{tab:sr_lpn_gs_main}

\begingroup

\sisetup{
    text-series-to-math      = true,
    uncertainty-mode         = separate,
    table-align-uncertainty  = true,
    allow-uncertainty-breaks = false,
    table-alignment-mode     = format,
    table-number-alignment   = center,
    table-text-alignment     = center,
    table-model-setup        = \bfseries,
    group-digits             = none
}

\small
\setlength{\tabcolsep}{4.8pt}
\renewcommand{\arraystretch}{1.12}

\begin{tabular}{
    @{}
    l
    c
    S[table-format = 2.2(1.2)]
    S[table-format = 2.1(2.1)]
    S[table-format = 2.2(2.2)]
    S[table-format = 1.3]
    @{}
}
\toprule
\textbf{Method}
&
\textbf{$n_{\mathrm{adapt}}$}
&
{\textbf{PSNR $\uparrow$}}
&
{\textbf{SSIM $(10^{-2})$ $\uparrow$}}
&
{\textbf{Gap $(10^{-2})$ $\downarrow$}}
&
{\textbf{$\widehat L_{\max}$ $\downarrow$}}
\\
\midrule

\multicolumn{6}{@{}l}{\textit{LPN proximal denoiser}} \\
\cmidrule(lr){1-6}

LPN-source
& \multicolumn{1}{c}{--}
& 9.94(1.76)
& 23.7(14.3)
& 89.61(39.40)
& \multicolumn{1}{c}{--}
\\

\rowcolor{referencegray}
LPN-reference
& \multicolumn{1}{c}{--}
& 27.67(2.34)
& 81.2(5.8)
& \multicolumn{1}{c}{0}
& \multicolumn{1}{c}{--}
\\

\addlinespace[2pt]
LPN-MSE
&
& 9.48(1.56)
& 24.8(13.3)
& 72.28(28.46)
& \multicolumn{1}{c}{--}
\\

\rowcolor{methodgray}
\textbf{LPN-PM}
& \multirow{-2}{*}{$1$}
& \bfseries 11.97(2.56)
& \bfseries 38.5(15.4)
& \bfseries 49.53(23.08)
& \multicolumn{1}{c}{--}
\\

\addlinespace[2pt]
LPN-MSE
&
& 21.54(2.74)
& 65.1(8.4)
& 6.79(3.52)
& \multicolumn{1}{c}{--}
\\

\rowcolor{methodgray}
\textbf{LPN-PM}
& \multirow{-2}{*}{$5$}
& \bfseries 25.07(2.10)
& \bfseries 73.9(6.0)
& \bfseries 3.12(1.12)
& \multicolumn{1}{c}{--}
\\

\addlinespace[2pt]
LPN-MSE
&
& 25.31(2.56)
& 75.2(6.2)
& 2.88(1.75)
& \multicolumn{1}{c}{--}
\\

\rowcolor{methodgray}
\textbf{LPN-PM}
& \multirow{-2}{*}{$25$}
& \bfseries 26.87(2.20)
& \bfseries 78.8(5.5)
& \bfseries 1.93(0.98)
& \multicolumn{1}{c}{--}
\\

\addlinespace[2pt]
LPN-MSE
&
& 26.73(2.13)
& 78.0(5.5)
& 1.92(1.01)
& \multicolumn{1}{c}{--}
\\

\rowcolor{methodgray}
\textbf{LPN-PM}
& \multirow{-2}{*}{$50$}
& \bfseries 27.18(2.14)
& \bfseries 79.4(5.5)
& \bfseries 1.58(0.77)
& \multicolumn{1}{c}{--}
\\

\addlinespace[4pt]
\multicolumn{6}{@{}l}{\textit{GS proximal denoiser}} \\
\cmidrule(lr){1-6}

GS-source
& \multicolumn{1}{c}{--}
& 18.08(1.10)
& 37.9(8.2)
& 15.35(4.65)
& 1.043
\\

\rowcolor{referencegray}
GS-reference
& \multicolumn{1}{c}{--}
& 27.67(2.35)
& 82.3(5.1)
& \multicolumn{1}{c}{0}
& 0.927
\\

\addlinespace[2pt]
GS-MSE
&
& 12.04(0.90)
& 13.4(3.8)
& 48.82(12.38)
& 6.874
\\

\rowcolor{methodgray}
\textbf{GS-AdaPM}
& \multirow{-2}{*}{$1$}
& \bfseries 23.37(1.07)
& \bfseries 58.2(4.1)
& \bfseries 7.47(1.50)
& 1.438
\\

\addlinespace[2pt]
GS-MSE
&
& 14.62(0.79)
& 19.9(3.5)
& 33.03(6.65)
& 5.204
\\

\rowcolor{methodgray}
\textbf{GS-AdaPM}
& \multirow{-2}{*}{$5$}
& \bfseries 23.61(1.08)
& \bfseries 60.1(4.1)
& \bfseries 7.19(1.30)
& 1.002
\\

\addlinespace[2pt]
GS-MSE
&
& 24.98(1.48)
& 67.0(4.6)
& 3.74(0.84)
& 0.989
\\

\rowcolor{methodgray}
\textbf{GS-AdaPM}
& \multirow{-2}{*}{$25$}
& \bfseries 26.75(1.96)
& \bfseries 78.0(4.5)
& \bfseries 2.18(0.50)
& 0.985
\\

\addlinespace[2pt]
GS-MSE
&
& 25.24(1.56)
& 68.6(4.6)
& 3.80(0.82)
& 0.991
\\

\rowcolor{methodgray}
\textbf{GS-AdaPM}
& \multirow{-2}{*}{$50$}
& \bfseries 26.85(1.99)
& \bfseries 78.4(4.6)
& \bfseries 2.08(0.49)
& 0.985
\\

\bottomrule
\end{tabular}

\endgroup
\end{table*}

\noindent
For LPNs, proximal matching achieves higher PSNR, higher SSIM, and lower relative trajectory mismatch than MSE adaptation at every adaptation budget. The PSNR gains over LPN-MSE are
$2.49$, $3.53$, $1.56$, and $0.45$ dB for
$n_{\mathrm{adapt}}\in\{1,5,25,50\}$, respectively.
With one target image, LPN-MSE performs below the source model, whereas LPN-PM improves the source PSNR from $9.94$ to $11.97$ dB.
The largest gain occurs with five target images.
As the adaptation set grows, both methods approach the target-trained reference and their performance gap narrows.

\noindent
The GS results show a larger difference between the two adaptation objectives in the low-data regime. With one and five target images, GS-MSE falls below the source-model PSNR, while GS-AdaPM achieves
$23.37$ and $23.61$ dB.
These results correspond to gains of
$11.33$ and $8.99$ dB over GS-MSE.
With 25 and 50 target images, GS-AdaPM improves PSNR by
$1.77$ and $1.61$ dB, respectively.
It also produces lower relative trajectory mismatch at every adaptation budget.

\noindent
For one-shot and five-shot adaptation, the empirical Jacobian estimates of GS-AdaPM are much smaller than those of GS-MSE.
For 25 and 50 target images, both methods yield estimates close to one.
These results show that the contractivity-regularized objective better controls the empirical local Jacobian during few-shot adaptation.

\begin{figure}[htp!]
    \centering
    \includegraphics[width=1.0\linewidth]{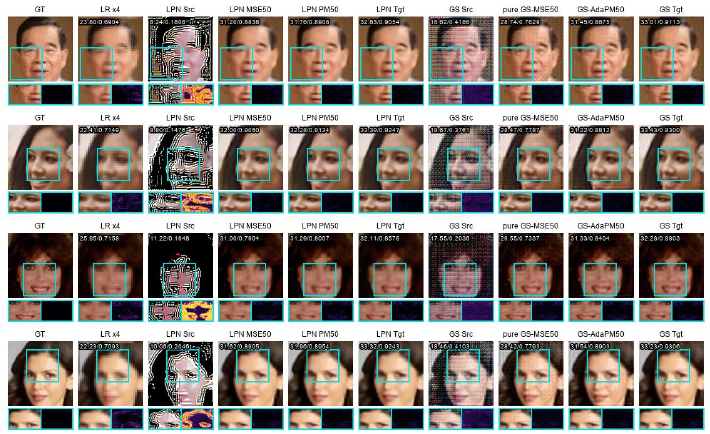}
    \caption{Representative $\times4$ super-resolution results for
    $n_{\mathrm{adapt}}=50$ under the BreCaHAD-to-CelebA domain shift.
    The proximal-matching variants recover sharper facial contours and fewer local distortions than the corresponding MSE-adapted models.
    The boxes mark the enlarged regions shown below each reconstruction.}
    \label{fig:SR-qualitative}
\end{figure}

\noindent
The visual comparisons in
Figure~\ref{fig:SR-qualitative} follow the quantitative results.
The source-trained models produce strong smoothing and visible artifacts around facial contours.
Both adaptation objectives improve the reconstructions, but the proximal-matching variants preserve sharper boundaries and more coherent local structure than the corresponding MSE baselines.

\noindent
Together with the deblurring results, these experiments show the same overall pattern across two forward operators.
Proximal-matching adaptation provides the largest gains when target-domain data are limited, and lower relative trajectory mismatch is associated with higher reconstruction quality.

\subsection{Further analysis}
Figure~\ref{fig:further} examines the effect of LPN adaptation along the PnP-PGD trajectory. At the same adaptation budget, PM-adapted denoisers generally achieve higher PSNR across iterations than their MSE-adapted counterparts. The difference is largest with one or five target images and decreases as the adaptation set grows. This trajectory-level trend agrees with the final reconstruction results in Tables~\ref{tab:deblurring_lpn_gs_main} and~\ref{tab:sr_lpn_gs_main}.

\noindent
The right-hand panels of Figure~\ref{fig:further} compare final PSNR with the average relative trajectory mismatch. Across both inverse problems, lower mismatch generally coincides with higher reconstruction PSNR. At each adaptation budget, proximal matching typically moves the result toward the upper-left region of the plot, corresponding to lower mismatch and higher PSNR. The same pattern appears for deblurring and super-resolution, supporting relative trajectory mismatch as a diagnostic of adaptation behavior.

\begin{figure}[htp!]
    \centering
    \includegraphics[width=1.0\linewidth]{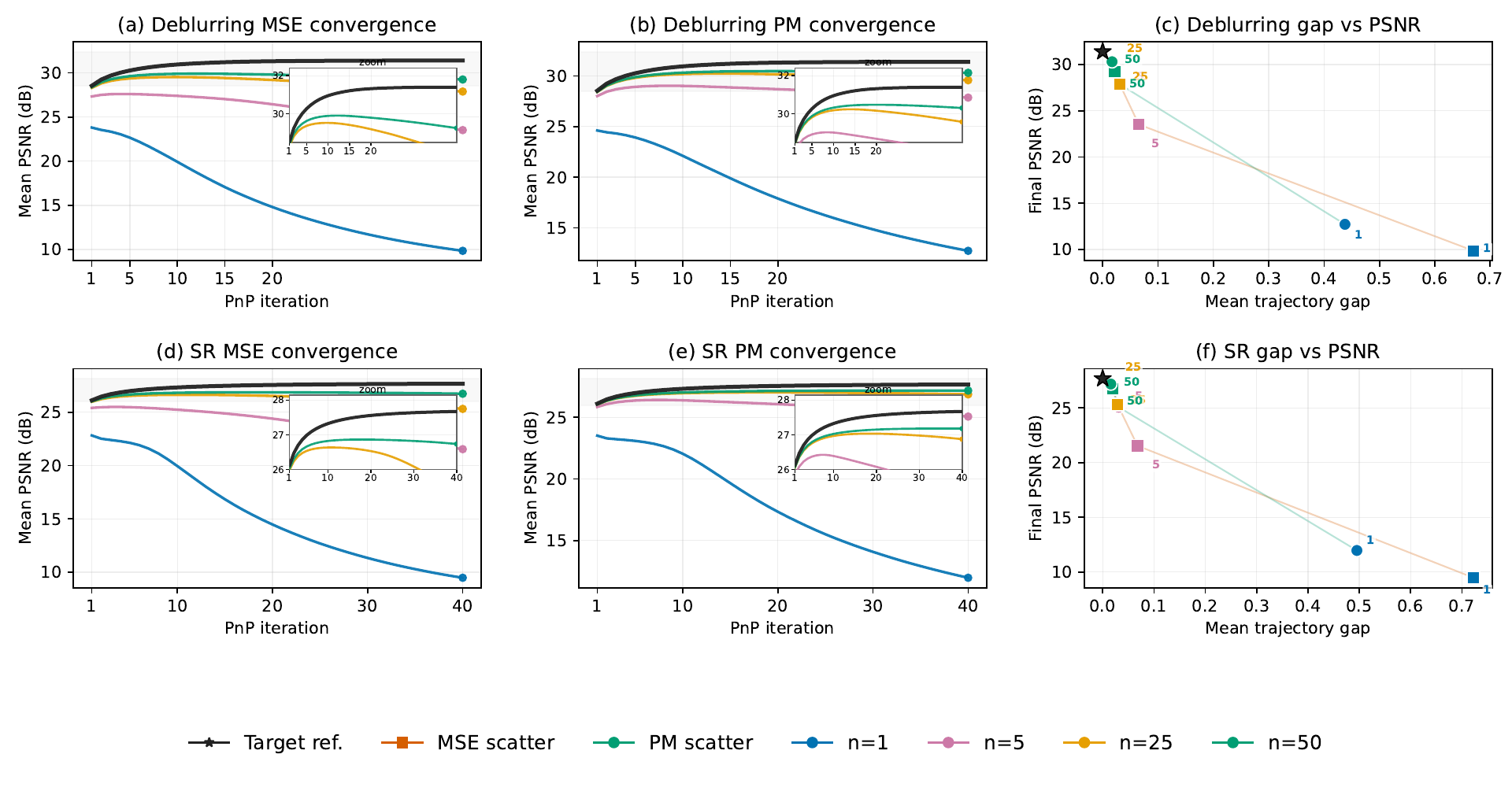}
    \caption{
    Trajectory-level analysis of LPN adaptation for Gaussian deblurring and
    $\times4$ super-resolution.
    The left and middle columns report mean PnP-PGD PSNR as a function of iteration for MSE- and PM-adapted denoisers, respectively.
    The right column plots final PSNR against the average relative trajectory mismatch.
    Proximal matching generally produces higher-PSNR trajectories and shifts the final results toward lower mismatch and higher reconstruction quality.
    Curves and points are averaged over 100 test images.
    }
    \label{fig:further}
\end{figure}

\noindent
Additional qualitative comparisons across adaptation budgets are provided in
Appendix~\ref{app:adaptation_budget}. For LPN-PM, the largest visual improvement occurs between one and five target images, followed by smaller refinements at larger budgets. GS-AdaPM changes less between one and five images and improves more clearly when the adaptation budget increases from five to 25 images. These observations agree with the quantitative trends in both inverse problems.

\section{Conclusion}
\label{sec:conclusion}

We studied domain shift in PnP-PGD when the deployed denoiser differs from a proximal map associated with the target domain. We formalized this discrepancy as \emph{proximal mismatch} and showed that it turns each deployed denoising update into an inexact proximal step for the target objective. Under regularity of the target proximal subproblems, we derived a nonasymptotic stationarity bound with an $\mathcal{O}(1/K)$ optimization term and an additive term proportional to the average squared proximal mismatch. The bound identifies persistent denoiser mismatch as a source of residual stationarity error.

\noindent
Guided by this result, we proposed few-shot adaptation based on proximal matching rather than MSE alone. We applied the same principle to Learned Proximal Networks and Gradient-Step denoisers, with contractivity regularization for the latter. Experiments on Gaussian deblurring and $\times4$ super-resolution under a BreCaHAD-to-CelebA domain shift showed that proximal matching improves reconstruction over MSE-based adaptation across both denoiser families. The gains were largest when only a few target-domain images were available and were accompanied by lower relative trajectory mismatch with the corresponding target-trained references.

\noindent
These results suggest that adapting a PnP denoiser should account for its role inside the reconstruction algorithm, rather than treating it only as a stand-alone denoising model. Our analysis assumes that the target prior admits a structured proximal reference, while the experiments use a target-trained reference to evaluate mismatch. Extending proximal matching to settings without such a reference, incorporating query points generated by the reconstruction trajectory, and evaluating broader domain shifts and inverse problems are important directions for future work.

\clearpage

\appendix

\section{Proofs}
\label{app:proofs}

\subsection{Proof of Lemma~\ref{lemma:bridge_discrepancy}}
\label{app:proof-bridge}

\begin{proof}
Let
\[
    \widehat{\boldsymbol{u}}_k
    =
    \widehat{\mathsf{D}}(\boldsymbol{z}_k),
    \qquad
    \boldsymbol{u}_k^\star
    =
    \mathsf{D}_\star(\boldsymbol{z}_k).
\]
By \(L_H\)-smoothness of \(H_k\) on the segment between
\(\widehat{\boldsymbol{u}}_k\) and \(\boldsymbol{u}_k^\star\),
\[
H_k(\widehat{\boldsymbol{u}}_k)
\leq
H_k(\boldsymbol{u}_k^\star)
+
\left\langle
    \nabla H_k(\boldsymbol{u}_k^\star),
    \widehat{\boldsymbol{u}}_k-\boldsymbol{u}_k^\star
\right\rangle
+
\frac{L_H}{2}
\|\widehat{\boldsymbol{u}}_k-\boldsymbol{u}_k^\star\|^2 .
\]
Since \(\nabla H_k(\boldsymbol{u}_k^\star)=0\), this gives
\[
    \epsilon_k
    =
    H_k(\widehat{\boldsymbol{u}}_k)
    -
    H_k(\boldsymbol{u}_k^\star)
    \leq
    \frac{L_H}{2}
    \|\widehat{\boldsymbol{u}}_k-\boldsymbol{u}_k^\star\|^2
    =
    \frac{L_H}{2}d_k^2 ,
\]
which is~\eqref{eq:bridge_smooth}.
\end{proof}

\subsection{Proof of Lemma~\ref{lemma:abstract_descent}}
\label{app:proof-descent}

\begin{proof}
By definition of the proximal objective error,
\[
    \epsilon_{k+1}
    =
    H_{k+1}(\boldsymbol{x}_{k+1})
    -
    H_{k+1}(\mathsf{D}_\star(\boldsymbol{z}_{k+1})).
\]
Since \(\mathsf{D}_\star(\boldsymbol{z}_{k+1})\) minimizes \(H_{k+1}\),
\[
    H_{k+1}(\mathsf{D}_\star(\boldsymbol{z}_{k+1}))
    \leq
    H_{k+1}(\boldsymbol{x}_k).
\]
Therefore,
\[
    H_{k+1}(\boldsymbol{x}_{k+1})
    \leq
    H_{k+1}(\boldsymbol{x}_k)+\epsilon_{k+1}.
\]
Expanding \(H_{k+1}\) yields
\[
R_\star(\boldsymbol{x}_{k+1})
+
\frac{1}{2}
\|\boldsymbol{x}_{k+1}-\boldsymbol{z}_{k+1}\|^2
\leq
R_\star(\boldsymbol{x}_k)
+
\frac{1}{2}
\|\boldsymbol{x}_k-\boldsymbol{z}_{k+1}\|^2
+
\epsilon_{k+1}.
\]
Using
\[
    \boldsymbol{z}_{k+1}
    =
    \boldsymbol{x}_k-\eta\nabla f(\boldsymbol{x}_k),
\]
we have
\[
    \boldsymbol{x}_k-\boldsymbol{z}_{k+1}
    =
    \eta\nabla f(\boldsymbol{x}_k),
\]
and
\[
    \boldsymbol{x}_{k+1}-\boldsymbol{z}_{k+1}
    =
    \boldsymbol{x}_{k+1}-\boldsymbol{x}_k
    +
    \eta\nabla f(\boldsymbol{x}_k).
\]
Hence
\[
\frac{1}{2}\|\boldsymbol{x}_k-\boldsymbol{z}_{k+1}\|^2
-
\frac{1}{2}\|\boldsymbol{x}_{k+1}-\boldsymbol{z}_{k+1}\|^2
=
-
\eta
\left\langle
    \boldsymbol{x}_{k+1}-\boldsymbol{x}_k,
    \nabla f(\boldsymbol{x}_k)
\right\rangle
-
\frac{1}{2}
\|\boldsymbol{x}_{k+1}-\boldsymbol{x}_k\|^2 .
\]
Thus
\begin{equation}
\begin{split}
R_\star(\boldsymbol{x}_{k+1})
&\leq
R_\star(\boldsymbol{x}_k)
-
\eta
\left\langle
    \boldsymbol{x}_{k+1}-\boldsymbol{x}_k,
    \nabla f(\boldsymbol{x}_k)
\right\rangle
\\
&\quad
-
\frac{1}{2}
\|\boldsymbol{x}_{k+1}-\boldsymbol{x}_k\|^2
+
\epsilon_{k+1}.
\end{split}
\label{eq:R_descent_expanded}
\end{equation}
By \(L_f\)-smoothness of \(f\),
\begin{equation}
\eta f(\boldsymbol{x}_{k+1})
\leq
\eta f(\boldsymbol{x}_k)
+
\eta
\left\langle
    \boldsymbol{x}_{k+1}-\boldsymbol{x}_k,
    \nabla f(\boldsymbol{x}_k)
\right\rangle
+
\frac{\eta L_f}{2}
\|\boldsymbol{x}_{k+1}-\boldsymbol{x}_k\|^2 .
\label{eq:f_descent_expanded}
\end{equation}
Adding~\eqref{eq:R_descent_expanded} and~\eqref{eq:f_descent_expanded} gives
\[
    F_\star(\boldsymbol{x}_{k+1})
    \leq
    F_\star(\boldsymbol{x}_k)
    -
    \frac{1-\eta L_f}{2}
    \|\boldsymbol{x}_{k+1}-\boldsymbol{x}_k\|^2
    +
    \epsilon_{k+1},
\]
which proves~\eqref{eq:abstract_descent}.
\end{proof}

\subsection{Proof of Lemma~\ref{lemma:abstract_residual}}
\label{app:proof-residual}

\begin{proof}
Let
\[
    \boldsymbol{r}_{k+1}
    :=
    -\nabla H_{k+1}(\boldsymbol{x}_{k+1}).
\]
Since
\[
    H_{k+1}(\boldsymbol{u})
    =
    \frac{1}{2}
    \|\boldsymbol{u}-\boldsymbol{z}_{k+1}\|^2
    +
    R_\star(\boldsymbol{u}),
\]
we have
\[
    \nabla H_{k+1}(\boldsymbol{x}_{k+1})
    =
    \boldsymbol{x}_{k+1}-\boldsymbol{z}_{k+1}
    +
    \nabla R_\star(\boldsymbol{x}_{k+1}).
\]
Therefore,
\[
    \nabla R_\star(\boldsymbol{x}_{k+1})
    =
    \boldsymbol{z}_{k+1}
    -
    \boldsymbol{x}_{k+1}
    -
    \boldsymbol{r}_{k+1}.
\]
This proves the residual identity in~\eqref{eq:abstract_residual}.
\noindent
For the norm bound, Assumption~\ref{assump:prox_subproblem_regularity} gives
\[
\|\nabla H_{k+1}(\boldsymbol{x}_{k+1})\|^2
\leq
2L_H
\left(
    H_{k+1}(\boldsymbol{x}_{k+1})
    -
    H_{k+1}(\mathsf{D}_\star(\boldsymbol{z}_{k+1}))
\right)
=
2L_H\epsilon_{k+1}.
\]
Since
\[
    \|\boldsymbol{r}_{k+1}\|
    =
    \|\nabla H_{k+1}(\boldsymbol{x}_{k+1})\|,
\]
the result follows.
\end{proof}

\subsection{Proof of Theorem~\ref{thm:main_stationarity}}
\label{app:proof-main}

\begin{proof}
By Lemma~\ref{lemma:abstract_residual},
\[
\begin{aligned}
\nabla F_\star(\boldsymbol{x}_{k+1})
&=
\eta\nabla f(\boldsymbol{x}_{k+1})
+
\nabla R_\star(\boldsymbol{x}_{k+1})
\\
&=
\boldsymbol{x}_k-\boldsymbol{x}_{k+1}
+
\eta
\big(
    \nabla f(\boldsymbol{x}_{k+1})
    -
    \nabla f(\boldsymbol{x}_k)
\big)
-
\boldsymbol{r}_{k+1}.
\end{aligned}
\]
Using the Lipschitz continuity of \(\nabla f\),
\[
\left\|
    \boldsymbol{x}_k-\boldsymbol{x}_{k+1}
    +
    \eta
    \big(
        \nabla f(\boldsymbol{x}_{k+1})
        -
        \nabla f(\boldsymbol{x}_k)
    \big)
\right\|
\leq
(1+\eta L_f)
\|\boldsymbol{x}_{k+1}-\boldsymbol{x}_k\|.
\]
Thus,
\[
\|\nabla F_\star(\boldsymbol{x}_{k+1})\|^2
\leq
2(1+\eta L_f)^2
\|\boldsymbol{x}_{k+1}-\boldsymbol{x}_k\|^2
+
2\|\boldsymbol{r}_{k+1}\|^2.
\]
By Lemma~\ref{lemma:abstract_residual},
\[
    2\|\boldsymbol{r}_{k+1}\|^2
    \leq
    4L_H\epsilon_{k+1}.
\]
By Lemma~\ref{lemma:abstract_descent},
\[
\|\boldsymbol{x}_{k+1}-\boldsymbol{x}_k\|^2
\leq
\frac{2}{1-\eta L_f}
\left(
    F_\star(\boldsymbol{x}_k)
    -
    F_\star(\boldsymbol{x}_{k+1})
    +
    \epsilon_{k+1}
\right).
\]
Combining the last three displays gives
\[
\begin{aligned}
\|\nabla F_\star(\boldsymbol{x}_{k+1})\|^2
&\leq
\frac{4(1+\eta L_f)^2}{1-\eta L_f}
\left(
    F_\star(\boldsymbol{x}_k)
    -
    F_\star(\boldsymbol{x}_{k+1})
\right)
\\
&\quad+
\left(
    \frac{4(1+\eta L_f)^2}{1-\eta L_f}
    +
    4L_H
\right)
\epsilon_{k+1}.
\end{aligned}
\]
Since \(\eta L_f<1\), \((1+\eta L_f)^2\leq4\), and hence
\[
\|\nabla F_\star(\boldsymbol{x}_{k+1})\|^2
\leq
\frac{16}{1-\eta L_f}
\left(
    F_\star(\boldsymbol{x}_k)
    -
    F_\star(\boldsymbol{x}_{k+1})
\right)
+
\left(
    \frac{16}{1-\eta L_f}
    +
    4L_H
\right)
\epsilon_{k+1}.
\]
Summing over \(k=0,\ldots,t-1\) yields
\[
\begin{aligned}
\sum_{k=0}^{t-1}
\|\nabla F_\star(\boldsymbol{x}_{k+1})\|^2
&\leq
\frac{16}{1-\eta L_f}
\sum_{k=0}^{t-1}
\left(
    F_\star(\boldsymbol{x}_k)
    -
    F_\star(\boldsymbol{x}_{k+1})
\right)
\\
&\quad+
\left(
    \frac{16}{1-\eta L_f}
    +
    4L_H
\right)
\sum_{k=0}^{t-1}
\epsilon_{k+1}.
\end{aligned}
\]
The objective terms telescope, and Assumption~\ref{assump:lower_bound} gives
\[
\sum_{k=0}^{t-1}
\left(
    F_\star(\boldsymbol{x}_k)
    -
    F_\star(\boldsymbol{x}_{k+1})
\right)
=
F_\star(\boldsymbol{x}_0)-F_\star(\boldsymbol{x}_t)
\leq
F_\star(\boldsymbol{x}_0)-F_\star^{\inf}.
\]
After reindexing,
\begin{equation}
\sum_{k=1}^{t}
\|\nabla F_\star(\boldsymbol{x}_k)\|^2
\leq
\frac{16}{1-\eta L_f}
\big(
    F_\star(\boldsymbol{x}_0)-F_\star^{\inf}
\big)
+
\left(
    \frac{16}{1-\eta L_f}
    +
    4L_H
\right)
\sum_{k=1}^{t}\epsilon_k .
\label{eq:epsilon_parameterized_rate}
\end{equation}
Lemma~\ref{lemma:bridge_discrepancy} gives
\[
    \epsilon_k
    \leq
    \frac{L_H}{2}d_k^2.
\]
Substituting this into~\eqref{eq:epsilon_parameterized_rate} and dividing by \(t\) proves~\eqref{eq:abstract_rate}. The bound on the minimum follows from
\[
    \min_{1\leq k\leq t}a_k
    \leq
    \frac{1}{t}\sum_{k=1}^{t}a_k
\]
for nonnegative \(a_k\).
\noindent
Finally, if \(\sum_{k=1}^{\infty}d_k^2<\infty\), then
\[
    \sum_{k=1}^{\infty}\epsilon_k<\infty .
\]
Letting \(t\to\infty\) in~\eqref{eq:epsilon_parameterized_rate} gives
\[
    \sum_{k=1}^{\infty}
    \|\nabla F_\star(\boldsymbol{x}_k)\|^2
    <
    \infty .
\]
Therefore
\[
    \|\nabla F_\star(\boldsymbol{x}_k)\|\to0 .
\]
\end{proof}

\subsection{Proof of Corollary~\ref{cor:exact_target}}
\label{app:proof-exact-target}

\begin{proof}
If \(\widehat{\mathsf{D}}=\mathsf{D}_\star\), then
\[
    d_k
    =
    \|\widehat{\mathsf{D}}(\boldsymbol{z}_k)
    -
    \mathsf{D}_\star(\boldsymbol{z}_k)\|
    =
    0
\]
for all visited query points. Lemma~\ref{lemma:bridge_discrepancy} gives
\[
    \epsilon_k
    \leq
    \frac{L_H}{2}d_k^2
    =
    0.
\]
Since \(\epsilon_k\geq0\), we have \(\epsilon_k=0\). Substituting \(d_k=0\) into Theorem~\ref{thm:main_stationarity} gives
\[
    \frac{1}{t}\sum_{k=1}^{t}
    \|\nabla F_\star(\boldsymbol{x}_k)\|^2
    \leq
    \frac{16}{(1-\eta L_f)t}
    \big(
        F_\star(\boldsymbol{x}_0)-F_\star^{\inf}
    \big),
\]
which is~\eqref{eq:exact_rate}.
\end{proof}

\subsection{Proof of Proposition~\ref{prop:pm_vs_mse}}
\label{proof:prop_pm_vs_mse}

\begin{proof}
For a fixed \(y\), the conditional MSE risk decomposes as
\begin{align*}
    \mathbb E
    \left[
        \|\mathsf{D}(Y)-X\|_2^2
        \,\middle|\,
        Y=y
    \right]
    &=
    \left\|
        \mathsf{D}(y)-\mathbb E[X\mid Y=y]
    \right\|_2^2
    \\
    &\quad+
    \mathbb E
    \left[
        \left\|
            X-\mathbb E[X\mid Y=y]
        \right\|_2^2
        \,\middle|\,
        Y=y
    \right].
\end{align*}
The second term is independent of \(\mathsf{D}(y)\). Hence,
\[
    \mathsf{D}_{\mathrm{MSE}}^\star(y)
    =
    \mathbb E[X\mid Y=y].
\]

The conditional PM risk equals
\[
    1-
    \mathbb E
    \left[
        \exp
        \left(
            -\frac{\|\mathsf{D}(y)-X\|_2^2}{2\gamma^2}
        \right)
        \,\middle|\,
        Y=y
    \right].
\]
Its minimizers therefore maximize
\[
    z
    \mapsto
    \int
    \exp
    \left(
        -\frac{\|z-x\|_2^2}{2\gamma^2}
    \right)
    p_{\mathrm{tar}}(x\mid y)\,dx.
\]
Up to normalization, this is the Gaussian smoothing of
\(p_{\mathrm{tar}}(\cdot\mid y)\). The consistency result
of~\citet{fang2024whats} gives
\[
    \mathsf{D}_{\mathrm{PM},\gamma}^\star(y)
    \longrightarrow
    \arg\max_x p_{\mathrm{tar}}(x\mid y)
\]
as \(\gamma\downarrow0\), under the stated conditions.

Finally,
\[
    p_{\mathrm{tar}}(x\mid y)
    \propto
    \exp
    \left(
        -\frac{\|y-x\|_2^2}{2\sigma_{\mathrm{adapt}}^2}
    \right)
    p_{\mathrm{tar}}(x).
\]
Using \(R_{\mathrm{tar}}(x)=-\log p_{\mathrm{tar}}(x)\),
\[
    \arg\max_x p_{\mathrm{tar}}(x\mid y)
    =
    \operatorname{prox}_{\sigma_{\mathrm{adapt}}^2R_{\mathrm{tar}}}(y).
\]
\end{proof}

\section{Proximal Characterizations of Structured Denoisers}
\label{app:structured_proximal_realizations}

This appendix collects the proximal characterizations used in
Section~\ref{subsec:proximal_realizations} and derives the corresponding
smoothness constants for the proximal subproblems.

\subsection{Learned Proximal Networks}

Recall the LPN parameterization
\[
    \mathsf{D}_\theta=\nabla\Psi_\theta,
    \qquad
    \Psi_\theta(\boldsymbol{x})
    =
    \psi_\theta(\boldsymbol{x})
    +
    \frac{\alpha}{2}\|\boldsymbol{x}\|^2,
\]
where \(\psi_\theta\) is input-convex and \(\alpha\in(0,1)\).

\begin{myprop}[\citep{fang2024whats}]
\label{prop:lpn}
Let \(\mathsf{D}_\theta=\nabla\Psi_\theta\) be an LPN as
in~\eqref{eq:LPN_def}. Then:
\begin{itemize}
    \item[(i)] \(\mathsf{D}_\theta:\mathbb{R}^n\to\mathbb{R}^n\) is a
    continuous bijection with continuous inverse
    \[
        \mathsf{D}_\theta^{-1}
        =
        \nabla\Psi_\theta^\star,
    \]
    and \(\mathsf{D}_\theta^{-1}\) is \(1/\alpha\)-Lipschitz;

    \item[(ii)] there exists a real-valued, \(\mathcal{C}^1\), coercive
    function \(R_\theta:\mathbb{R}^n\to\mathbb{R}\) such that
    \[
        \mathsf{D}_\theta
        =
        \operatorname{prox}_{R_\theta};
    \]

    \item[(iii)] for every \(\boldsymbol{x}\in\mathbb{R}^n\),
    \[
        \nabla R_\theta(\boldsymbol{x})
        =
        \mathsf{D}_\theta^{-1}(\boldsymbol{x})
        -
        \boldsymbol{x}.
    \]
\end{itemize}
\end{myprop}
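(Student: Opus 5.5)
Three facts drive the argument: the structure \(\Psi_\theta=\psi_\theta+\tfrac{\alpha}{2}\|\cdot\|^2\) with \(\psi_\theta\) convex and \(\mathcal C^2\); convex duality; and the standard characterization of proximal maps as gradients of convex potentials. I will prove the three items in order.

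\textbf{Item (i).} Since \(\psi_\theta\) is convex and \(\mathcal C^2\), \(\Psi_\theta\) is \(\mathcal C^2\) and \(\alpha\)-strongly convex. Hence \(\nabla\Psi_\theta\) is continuous and \(\alpha\)-strongly monotone:
\[
\langle \nabla\Psi_\theta(\boldsymbol x)-\nabla\Psi_\theta(\boldsymbol y),\boldsymbol x-\boldsymbol y\rangle\ \ge\ \alpha\|\boldsymbol x-\boldsymbol y\|^2 .
\]
Cauchy--Schwarz then gives \(\|\mathsf D_\theta(\boldsymbol x)-\mathsf D_\theta(\boldsymbol y)\|\ge\alpha\|\boldsymbol x-\boldsymbol y\|\). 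This proves injectivity, and also that any inverse is \(1/\alpha\)-Lipschitz.

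For surjectivity, fix \(\boldsymbol v\) and minimize the strongly convex, coercive function \(\Psi_\theta(\boldsymbol x)-\langle\boldsymbol v,\boldsymbol x\rangle\). It has a unique minimizer \(\boldsymbol x\), which satisfies \(\nabla\Psi_\theta(\boldsymbol x)=\boldsymbol v\).

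Strong convexity of \(\Psi_\theta\) makes the conjugate \(\Psi_\theta^\star\) finite everywhere and \(1/\alpha\)-smooth. Fenchel--Young equality gives \(\boldsymbol v=\nabla\Psi_\theta(\boldsymbol x)\iff \boldsymbol x=\nabla\Psi_\theta^\star(\boldsymbol v)\), so \(\mathsf D_\theta^{-1}=\nabla\Psi_\theta^\star\).

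\textbf{Items (ii) and (iii).} I will construct \(R_\theta\) explicitly as
\[
R_\theta(\boldsymbol x):=\Psi_\theta^\star(\boldsymbol x)-\tfrac12\|\boldsymbol x\|^2 .
\]
This function is real-valued and \(\mathcal C^1\), and \(\nabla R_\theta(\boldsymbol x)=\mathsf D_\theta^{-1}(\boldsymbol x)-\boldsymbol x\), which is item (iii).

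For (ii), fix \(\boldsymbol z\) and consider
\[
\tfrac12\|\boldsymbol x-\boldsymbol z\|^2+R_\theta(\boldsymbol x)=\Psi_\theta^\star(\boldsymbol x)-\langle\boldsymbol z,\boldsymbol x\rangle+\tfrac12\|\boldsymbol z\|^2 .
\]
The right-hand side is convex in \(\boldsymbol x\) because \(\Psi_\theta^\star\) is convex. Its stationary points satisfy \(\nabla\Psi_\theta^\star(\boldsymbol x)=\boldsymbol z\), i.e. \(\boldsymbol x=\nabla\Psi_\theta(\boldsymbol z)=\mathsf D_\theta(\boldsymbol z)\), which exists and is unique by (i). Convexity makes this stationary point the global minimizer, so \(\mathsf D_\theta=\operatorname{prox}_{R_\theta}\).

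This identity is also the representation \(H_k=\Psi^\star-\langle\boldsymbol z_k,\cdot\rangle\) used in Section~\ref{subsec:proximal_realizations}.

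\textbf{Coercivity (main obstacle).} \(\Psi_\theta^\star\) grows at most quadratically with curvature \(\le 1/\alpha\), so the quadratic subtraction needs care. I would first try the bound \(\Psi_\theta^\star(\boldsymbol x)\ge\langle\boldsymbol x,\boldsymbol y\rangle-\Psi_\theta(\boldsymbol y)\) with \(\boldsymbol y=c\boldsymbol x\), giving
\[
R_\theta(\boldsymbol x)\ \ge\ (c-\tfrac12)\|\boldsymbol x\|^2-\Psi_\theta(c\boldsymbol x).
\]
This suffices only if \(\psi_\theta\) has subquadratic growth (e.g.\ Lipschitz ICNN outputs with bounded-slope activations). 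Otherwise the conclusion depends on the growth of \(\psi_\theta\), so I would invoke the corresponding condition from \citet{fang2024whats} at this step rather than reprove it.
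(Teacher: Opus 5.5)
Your arguments for (i), (iii), and the proximal identity in (ii) are correct and complete, and they follow the standard route. Your $R_\theta:=\Psi_\theta^\star-\tfrac12\|\cdot\|^2$ is, by Fenchel--Young equality, the familiar formula $\langle\boldsymbol x,\mathsf D_\theta^{-1}(\boldsymbol x)\rangle-\tfrac12\|\boldsymbol x\|^2-\Psi_\theta(\mathsf D_\theta^{-1}(\boldsymbol x))$. It is also what the paper uses implicitly in Appendix~\ref{app:structured_proximal_realizations} when it writes $H_k=\Psi_{\theta^\star}^\star-\langle\boldsymbol z_k,\cdot\rangle$. The paper gives no argument beyond the citation to \citet{fang2024whats}, so on these parts there is nothing further to compare.

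The genuine gap is coercivity. Your suspicion there is right, but deferring to ``the corresponding condition'' in \citet{fang2024whats} does not close it. Under the hypotheses as written (an ICNN with arbitrary $\mathcal C^2$ activations) the claim is false, so no citation can supply it.

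\emph{No other regularizer can help.} Since $\mathsf D_\theta$ is onto, the first-order condition $\boldsymbol x-\boldsymbol z+\nabla R_\theta(\boldsymbol x)=\boldsymbol 0$ at $\boldsymbol x=\mathsf D_\theta(\boldsymbol z)$ shows that every $\mathcal C^1$ $R_\theta$ with $\mathsf D_\theta=\operatorname{prox}_{R_\theta}$ satisfies (iii). Hence any such $R_\theta$ equals your $\Psi_\theta^\star-\tfrac12\|\cdot\|^2$ up to an additive constant.

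\emph{Counterexample.} Suppose $\psi_\theta\ge\tfrac\beta2\|\cdot\|^2-C$ with $\alpha+\beta>1$. Order reversal of conjugation gives $\Psi_\theta^\star\le\tfrac{1}{2(\alpha+\beta)}\|\cdot\|^2+C$, so $R_\theta\le-\tfrac{\alpha+\beta-1}{2(\alpha+\beta)}\|\cdot\|^2+C\to-\infty$. A one-layer ICNN realizes this. Take $g(t)=\max(0,t)^3$, which is convex, nondecreasing and $\mathcal C^2$, and set $\psi_\theta(\boldsymbol y)=\sum_i g(y_i)+g(-y_i)=\|\boldsymbol y\|_3^3$. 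Then $\Psi_\theta^\star=O(\|\boldsymbol x\|^{3/2})$ and $R_\theta$ is unbounded below.

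\emph{What your bound does give.} Your Fenchel--Young bound proves coercivity whenever $\psi_\theta\le\tfrac\beta2\|\cdot\|^2+C$ with $\alpha+\beta<1$. Taking $\boldsymbol y=\boldsymbol x/(\alpha+\beta)$ yields $R_\theta(\boldsymbol x)\ge\tfrac{1-\alpha-\beta}{2(\alpha+\beta)}\|\boldsymbol x\|^2-C$. This covers every Lipschitz $\psi_\theta$, for example softplus ICNNs, so ``subquadratic'' is stronger than you need.

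The correct conclusion is that (ii) requires such a growth hypothesis, rather than a deferred citation. This does not affect the rest of the paper. Appendix~\ref{app:structured_proximal_realizations} uses only (i), (iii) and the proximal identity to obtain $L_H=1/\alpha$, and lower-boundedness of $F_\star$ is imposed separately in Assumption~\ref{assump:lower_bound}.
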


\paragraph{Proximal-subproblem regularity.}

Let
\[
    \mathsf{D}_\star=\mathsf{D}_{\theta^\star},
    \qquad
    R_\star=R_{\theta^\star}.
\]
For the proximal subproblem
\[
    H_k(\boldsymbol{u})
    =
    R_\star(\boldsymbol{u})
    +
    \frac{1}{2}
    \|\boldsymbol{u}-\boldsymbol{z}_k\|^2,
\]
Proposition~\ref{prop:lpn} gives
\begin{align*}
    \nabla H_k(\boldsymbol{u})
    &=
    \nabla R_\star(\boldsymbol{u})
    +
    \boldsymbol{u}
    -
    \boldsymbol{z}_k
    \\
    &=
    \mathsf{D}_{\theta^\star}^{-1}(\boldsymbol{u})
    -
    \boldsymbol{z}_k.
\end{align*}
Since
\[
    \mathsf{D}_{\theta^\star}^{-1}
    =
    \nabla\Psi_{\theta^\star}^\star,
\]
integration yields
\[
    H_k(\boldsymbol{u})
    =
    \Psi_{\theta^\star}^\star(\boldsymbol{u})
    -
    \langle\boldsymbol{z}_k,\boldsymbol{u}\rangle.
\]
The function \(H_k\) is convex, and its gradient is \(1/\alpha\)-Lipschitz.
Thus Assumption~\ref{assump:prox_subproblem_regularity} holds with
\[
    L_H=\frac{1}{\alpha}.
\]
Applying Lemma~\ref{lemma:bridge_discrepancy} with
\[
    \boldsymbol{u}_k
    =
    \widehat{\mathsf{D}}_\theta(\boldsymbol{z}_k),
    \qquad
    \boldsymbol{u}_k^\star
    =
    \mathsf{D}_{\theta^\star}(\boldsymbol{z}_k),
\]
gives
\[
    \epsilon_k
    \leq
    \frac{1}{2\alpha}
    \|\boldsymbol{u}_k-\boldsymbol{u}_k^\star\|^2,
\]
which is precisely~\eqref{eq:lpn_eps}.

\subsection{Gradient-Step Proximal Denoisers}

Recall that a GS denoiser is defined by
\[
    \mathsf{D}_\sigma
    =
    \operatorname{Id}-\nabla g_\sigma
    =
    \nabla h_\sigma,
    \qquad
    h_\sigma(\boldsymbol{x})
    =
    \frac12\|\boldsymbol{x}\|^2-g_\sigma(\boldsymbol{x}).
\]

\begin{myprop}[GS proximal interpretation; \citep{hurault2022proximal}]
\label{prop:gs_prox}
Let \(\mathcal{X}\subset\mathbb{R}^n\) be open and convex, and let
\(g_\sigma:\mathcal{X}\to\mathbb{R}\) be sufficiently smooth. If
\(\nabla g_\sigma\) is \(L\)-Lipschitz with \(L<1\), then:
\begin{itemize}
    \item[(i)] \(h_\sigma\) is \((1-L)\)-strongly convex;

    \item[(ii)] \(\mathsf{D}_\sigma\) is injective;

    \item[(iii)] there exists an induced regularizer \(\phi_\sigma\) such
    that
    \[
        \mathsf{D}_\sigma
        =
        \operatorname{prox}_{\phi_\sigma}
        \qquad
        \text{on }
        \operatorname{Im}(\mathsf{D}_\sigma);
    \]

    \item[(iv)] for every
    \(\boldsymbol{x}\in\operatorname{Im}(\mathsf{D}_\sigma)\),
    \[
        \nabla\phi_\sigma(\boldsymbol{x})
        =
        \mathsf{D}_\sigma^{-1}(\boldsymbol{x})
        -
        \boldsymbol{x};
    \]

    \item[(v)] \(\mathsf{D}_\sigma^{-1}\) is
    \(1/(1-L)\)-Lipschitz.
\end{itemize}
\end{myprop}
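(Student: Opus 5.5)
We prove the five items in order; each follows from the previous one, and all rest on the bound $\|\nabla g_\sigma(\boldsymbol{x})-\nabla g_\sigma(\boldsymbol{y})\|\le L\|\boldsymbol{x}-\boldsymbol{y}\|$ with $L<1$ on the open convex set $\mathcal X$.

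\emph{Step 1: items (i) and (ii).} For $\boldsymbol{x},\boldsymbol{y}\in\mathcal X$, Cauchy--Schwarz gives
\[
\langle\nabla h_\sigma(\boldsymbol{x})-\nabla h_\sigma(\boldsymbol{y}),\boldsymbol{x}-\boldsymbol{y}\rangle=\|\boldsymbol{x}-\boldsymbol{y}\|^2-\langle\nabla g_\sigma(\boldsymbol{x})-\nabla g_\sigma(\boldsymbol{y}),\boldsymbol{x}-\boldsymbol{y}\rangle\ge(1-L)\|\boldsymbol{x}-\boldsymbol{y}\|^2 .
\]
Strong monotonicity of the gradient on a convex set is equivalent to $(1-L)$-strong convexity of $h_\sigma$, which is (i). The same display, combined with Cauchy--Schwarz on its left-hand side, yields
\[
\|\mathsf D_\sigma(\boldsymbol{x})-\mathsf D_\sigma(\boldsymbol{y})\|\ge(1-L)\|\boldsymbol{x}-\boldsymbol{y}\| .
\]
This gives injectivity (ii). Writing $\boldsymbol{u}=\mathsf D_\sigma(\boldsymbol{x})$ and $\boldsymbol{v}=\mathsf D_\sigma(\boldsymbol{y})$, it also gives $\|\mathsf D_\sigma^{-1}(\boldsymbol{u})-\mathsf D_\sigma^{-1}(\boldsymbol{v})\|\le\|\boldsymbol{u}-\boldsymbol{v}\|/(1-L)$ on $\operatorname{Im}(\mathsf D_\sigma)$, which is (v).

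\emph{Step 2: construct the regularizer for (iii) and (iv).} We follow \citet{hurault2022proximal}. Extend $h_\sigma$ by $+\infty$ outside $\mathcal X$. The result is proper, lsc and strictly convex, and its conjugate $h_\sigma^\star$ is convex with $\nabla h_\sigma^\star=(\nabla h_\sigma)^{-1}=\mathsf D_\sigma^{-1}$ on $\operatorname{Im}(\mathsf D_\sigma)$, by standard Legendre/Fenchel duality. We then set
\[
\phi_\sigma(\boldsymbol{x}):=h_\sigma^\star(\boldsymbol{x})-\tfrac12\|\boldsymbol{x}\|^2
\]
on $\operatorname{Im}(\mathsf D_\sigma)$ and $+\infty$ elsewhere. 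Differentiating gives (iv) immediately: $\nabla\phi_\sigma(\boldsymbol{x})=\mathsf D_\sigma^{-1}(\boldsymbol{x})-\boldsymbol{x}$.

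For (iii), fix $\boldsymbol{z}$ in the relevant domain. The proximal objective satisfies
\[
\tfrac12\|\boldsymbol{x}-\boldsymbol{z}\|^2+\phi_\sigma(\boldsymbol{x})=h_\sigma^\star(\boldsymbol{x})-\langle\boldsymbol{z},\boldsymbol{x}\rangle+\tfrac12\|\boldsymbol{z}\|^2 .
\]
This is convex in $\boldsymbol{x}$, and its minimizers are characterized by $\boldsymbol{z}\in\partial h_\sigma^\star(\boldsymbol{x})$, i.e. $\boldsymbol{x}=\nabla h_\sigma(\boldsymbol{z})=\mathsf D_\sigma(\boldsymbol{z})$. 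Uniqueness follows from injectivity, so $\mathsf D_\sigma=\operatorname{prox}_{\phi_\sigma}$. This rewriting also supplies the identity $H_k=h_\sigma^\star-\langle\boldsymbol{z}_k,\cdot\rangle$ (up to a constant) used in the main text.

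\emph{Main obstacle.} The delicate point is Step 2's duality: showing $\partial h_\sigma^\star(\boldsymbol{x})=\{\mathsf D_\sigma^{-1}(\boldsymbol{x})\}$ for $\boldsymbol{x}\in\operatorname{Im}(\mathsf D_\sigma)$ when $h_\sigma$ lives only on the open set $\mathcal X$, and ensuring that the minimizer of the convex objective lies in $\operatorname{Im}(\mathsf D_\sigma)$. The plan here is to use the Fenchel--Young equality $h_\sigma(\boldsymbol{z})+h_\sigma^\star(\boldsymbol{x})=\langle\boldsymbol{x},\boldsymbol{z}\rangle$, which holds iff $\boldsymbol{x}\in\partial h_\sigma(\boldsymbol{z})$, together with strict convexity of $h_\sigma$ to get single-valuedness. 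Beyond that, we defer to the cited construction of \citet{hurault2022proximal} rather than re-deriving the boundary technicalities.
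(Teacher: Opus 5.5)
The paper gives no proof of this proposition; it cites \citet{hurault2022proximal}. Your plan rebuilds that construction: $\phi_\sigma=h_\sigma^\star-\frac12\|\cdot\|^2$ on $\operatorname{Im}(\mathsf D_\sigma)$ and $+\infty$ elsewhere, which on the image equals $g_\sigma(\mathsf D_\sigma^{-1}(\boldsymbol x))-\frac12\|\mathsf D_\sigma^{-1}(\boldsymbol x)-\boldsymbol x\|^2$. Step~1 is correct and complete, and it gives (i), (ii) and (v) by elementary means. In Step~2, however, two assertions are wrong as stated and one needed fact is missing. All three are easy to fix, and the fix removes the need to defer your ``main obstacle''.

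\textbf{The extension is not lsc.} Extending $h_\sigma$ by $+\infty$ off the open set $\mathcal X$ does not in general give an lsc function; take $n=1$, $\mathcal X=(0,1)$, $g_\sigma=0$. It is not of Legendre type either, since no blow-up of $\nabla h_\sigma$ at $\partial\mathcal X$ is assumed. So neither biconjugation nor ``standard Legendre duality'' applies to it as is. Use its closure $\bar h$ instead. It is proper, closed and still $(1-L)$-strongly convex. It agrees with $h_\sigma$ on $\mathcal X$, so $\partial\bar h(\boldsymbol z)=\{\nabla h_\sigma(\boldsymbol z)\}$ there, and it has the same conjugate $h_\sigma^\star$.

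\textbf{The proximal objective need not be convex.} Your displayed identity holds only on $\operatorname{Im}(\mathsf D_\sigma)$. The actual proximal objective is therefore $h_\sigma^\star-\langle\boldsymbol z,\cdot\rangle$ plus the indicator of $\operatorname{Im}(\mathsf D_\sigma)$, and that image can be nonconvex when $\mathcal X\neq\mathbb R^n$. This does no harm. For $\boldsymbol z\in\mathcal X$, the unconstrained minimizers of $h_\sigma^\star-\langle\boldsymbol z,\cdot\rangle$ form the set $\partial\bar h(\boldsymbol z)=\{\mathsf D_\sigma(\boldsymbol z)\}$. That point lies in $\operatorname{Im}(\mathsf D_\sigma)$ by definition, so it is the unique proximal point. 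Locating the minimizer in the image is therefore not an obstacle at all. Also, uniqueness here comes from $\partial\bar h(\boldsymbol z)$ being a singleton, not from injectivity; injectivity is what matters for (iv).

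\textbf{The missing fact is in (iv).} Because $\phi_\sigma=+\infty$ off $\operatorname{Im}(\mathsf D_\sigma)$, $\nabla\phi_\sigma(\boldsymbol x)$ exists only if $\operatorname{Im}(\mathsf D_\sigma)$ is open and $h_\sigma^\star$ is differentiable there. You defer the second point and never mention the first. Strong convexity of $\bar h$ gives both:
\begin{itemize}
\item[(a)] $h_\sigma^\star$ is finite and $\mathcal C^1$ on $\mathbb R^n$, with $\nabla h_\sigma^\star(\boldsymbol x)=\arg\max_{\boldsymbol u}\{\langle\boldsymbol x,\boldsymbol u\rangle-\bar h(\boldsymbol u)\}$.
\item[(b)] By the Fenchel--Young equality, this maximizer is $\mathsf D_\sigma^{-1}(\boldsymbol x)$ whenever $\boldsymbol x\in\operatorname{Im}(\mathsf D_\sigma)$.
\item[(c)] Conversely, if $\nabla h_\sigma^\star(\boldsymbol x)\in\mathcal X$, then $\boldsymbol x\in\partial\bar h(\nabla h_\sigma^\star(\boldsymbol x))=\{\nabla h_\sigma(\nabla h_\sigma^\star(\boldsymbol x))\}$, so $\boldsymbol x\in\operatorname{Im}(\mathsf D_\sigma)$.
\end{itemize}
Hence $\operatorname{Im}(\mathsf D_\sigma)=(\nabla h_\sigma^\star)^{-1}(\mathcal X)$ is open, and differentiating gives (iv).

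This conjugate argument needs only $g_\sigma\in\mathcal C^1$. If $g_\sigma\in\mathcal C^2$, you can instead apply the inverse function theorem to $J_{\mathsf D_\sigma}=I-\nabla^2 g_\sigma\succeq(1-L)I$ and differentiate the explicit formula above by the chain rule. That route also gives higher regularity of $\phi_\sigma$ on the image.
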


\paragraph{Proximal-subproblem regularity.}

Let
\[
    \mathsf{D}_\star=\mathsf{D}_{\sigma,\star},
    \qquad
    R_\star=\phi_{\sigma,\star}.
\]
Because the induced regularizer is defined through the image of the target
denoiser, assume that
\[
    \widehat{\mathsf{D}}_\sigma(\boldsymbol{z}_k)
    \in
    \operatorname{Im}(\mathsf{D}_{\sigma,\star})
\]
for every visited query point. This ensures that
\(\phi_{\sigma,\star}\) is finite and differentiable at each produced
iterate.

\noindent
For
\[
    H_k(\boldsymbol{u})
    =
    \phi_{\sigma,\star}(\boldsymbol{u})
    +
    \frac12
    \|\boldsymbol{u}-\boldsymbol{z}_k\|^2,
\]
we have, on \(\operatorname{Im}(\mathsf{D}_{\sigma,\star})\),
\begin{align*}
    \nabla H_k(\boldsymbol{u})
    &=
    \nabla\phi_{\sigma,\star}(\boldsymbol{u})
    +
    \boldsymbol{u}
    -
    \boldsymbol{z}_k
    \\
    &=
    \mathsf{D}_{\sigma,\star}^{-1}(\boldsymbol{u})
    -
    \boldsymbol{z}_k.
\end{align*}
Since
\[
    \mathsf{D}_{\sigma,\star}^{-1}
    =
    \nabla h_{\sigma,\star}^\star,
\]
it follows, up to an additive constant, that
\[
    H_k(\boldsymbol{u})
    =
    h_{\sigma,\star}^\star(\boldsymbol{u})
    -
    \langle\boldsymbol{z}_k,\boldsymbol{u}\rangle.
\]
Because \(h_{\sigma,\star}\) is \((1-L)\)-strongly convex, its conjugate has
a \(1/(1-L)\)-Lipschitz gradient. Hence \(H_k\) is convex and
\(1/(1-L)\)-smooth, and
\[
    L_H=\frac{1}{1-L}.
\]
Applying Lemma~\ref{lemma:bridge_discrepancy} with
\[
    \boldsymbol{u}_k
    =
    \widehat{\mathsf{D}}_\sigma(\boldsymbol{z}_k),
    \qquad
    \boldsymbol{u}_k^\star
    =
    \mathsf{D}_{\sigma,\star}(\boldsymbol{z}_k),
\]
gives
\[
    \epsilon_k
    \leq
    \frac{1}{2(1-L)}
    \|\boldsymbol{u}_k-\boldsymbol{u}_k^\star\|^2,
\]
which is~\eqref{eq:gs_eps}.

\section{Additional Experimental Results}
\label{app:additional_experiments}

\subsection{Effect of the Adaptation Budget}
\label{app:adaptation_budget}

This section provides qualitative comparisons across different numbers of target-domain adaptation images. The corresponding quantitative results are reported in Tables~\ref{tab:deblurring_lpn_gs_main} and~\ref{tab:sr_lpn_gs_main}. The figures show that the effect of the adaptation budget differs between the two denoiser families. LPN-PM improves most strongly between one and five target images, whereas GS-AdaPM shows a larger change between five and 25 images.

\begin{figure}[htp!]
    \centering
    \includegraphics[width=0.95\linewidth]{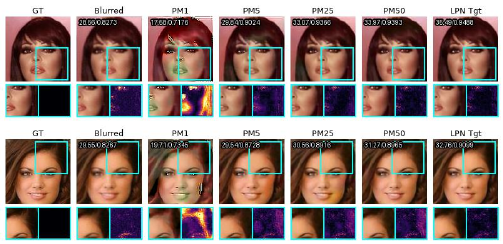}
    \caption{
    Representative Gaussian deblurring results for LPN-PM at different target-domain adaptation budgets.
    The largest visual improvement occurs between one and five adaptation images.
    Additional target images further reduce artifacts and refine facial contours and local details.
    }
    \label{fig:db-lpn-adaptsize}
\end{figure}

\begin{figure}[htp!]
    \centering
    \includegraphics[width=0.95\linewidth]{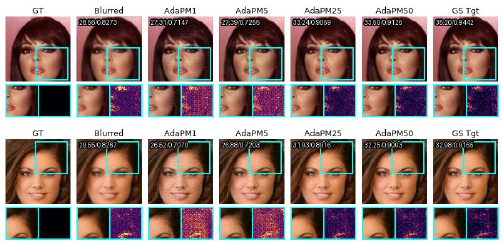}
    \caption{
    Representative Gaussian deblurring results for GS-AdaPM at different target-domain adaptation budgets.
    The reconstructions change less between one and five images and improve more clearly when 25 target images are used.
    Larger adaptation sets recover sharper facial contours and finer local structure.
    }
    \label{fig:db-gs-adaptsize}
\end{figure}

\begin{figure}[htp!]
    \centering
    \includegraphics[width=0.95\linewidth]{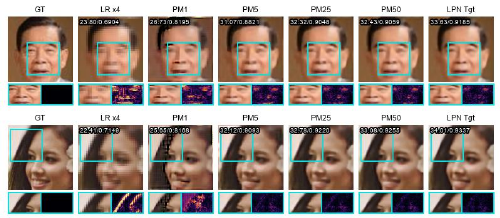}
    \caption{
    Representative $\times4$ super-resolution results for LPN-PM at different target-domain adaptation budgets.
    The largest visual improvement occurs between one and five target images.
    Further adaptation produces smaller refinements in facial boundaries and local detail.
    }
    \label{fig:SR-lpn-adaptsize}
\end{figure}

\begin{figure}[htp!]
    \centering
    \includegraphics[width=0.95\linewidth]{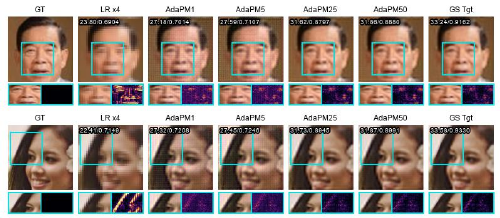}
    \caption{
    Representative $\times4$ super-resolution results for GS-AdaPM at different target-domain adaptation budgets.
    The main visual improvement occurs when the adaptation budget increases from five to 25 images.
    Larger budgets produce more coherent facial boundaries and fewer local distortions.
    }
    \label{fig:SR-gs-adaptsize}
\end{figure}

\clearpage

\bibliography{main.bib}

\end{document}